\documentclass[11pt]{article}
\usepackage[letterpaper]{geometry}
\usepackage{booktabs}
\usepackage{graphicx} % Required for inserting images
\usepackage{hyperref}
\usepackage{xcolor}
\usepackage{mathrsfs}
\usepackage{amsthm}
\usepackage{mathtools}
\usepackage{amssymb}
\usepackage{subcaption}
\usepackage{caption}
\usepackage{natbib}
\usepackage{geometry}
\usepackage{listings}
\usepackage{xparse}
\usepackage{setspace}
\usepackage{algorithmicx}
\usepackage{algpseudocode}
\usepackage{algorithm}
\usepackage{authblk}
\usepackage{mathabx}
\usepackage[english]{babel}
\newtheorem{theorem}{Theorem}

\theoremstyle{remark}

\newtheorem{lemma}{Lemma}

\title{Least Absolute Deviations Estimation for Sinusoidal Models}

\author{
    Zehaan Naik\thanks{Department of Statistical Science, 
    Duke University. 
    
    Email: znn@duke.edu}
    \and
    Debasis Kundu\thanks{Department of Mathematics and Statistics, 
    Indian Institute of Technology Kanpur. 
    
    Email: kundu@iitk.ac.in}
}
\date{\today}

\begin{document}
\onehalfspacing

\maketitle

\begin{abstract}
We study robust parameter estimation in sinusoidal regression models within a least absolute deviations (LAD) framework. While classical approaches rely predominantly on least-squares formulations, they are known to be sensitive to heavy-tailed noise and outliers. We formulate the estimation problem as direct minimization of the LAD objective and propose a simple, modular coordinate descent algorithm that exploits the partial convexity of the objective: amplitude parameters are updated via weighted median computations, leading to substantial computational improvements over traditional simplex-based optimization methods, while frequency parameters are estimated via a periodogram-inspired grid search with local refinement. We establish strong consistency and asymptotic normality of the proposed estimator under mild regularity conditions. Empirically, we demonstrate the method's effectiveness on both synthetic datasets and real-world time series, including the Mauna Loa atmospheric CO$_2$ data, air passenger data, and UK drivers’ deaths data, where robustness to non-Gaussian noise is essential. The proposed approach provides a simple, interpretable, and robust alternative to least-squares-based methods for sinusoidal signal estimation.

\end{abstract}
\noindent\textbf{Keywords:} Least absolute deviations, coordinate descent, sinusoidal regression, robust estimation, periodogram, non-convex optimization, multi-frequency models

\section{Introduction}

Sinusoidal models form a fundamental building block in signal processing, time series analysis, and related domains, where oscillatory behavior arises naturally in applications ranging from communications and radar to climate dynamics and biomedical signals. A standard formulation considers observations of the form
\begin{equation}
y_t = A^0 \sin(\theta^0 t) + B^0 \cos(\theta^0 t) + \varepsilon_t, \quad t = 1, \dots, n,
\end{equation}
where $A, B \in \mathbb{R}$ denote amplitude parameters, $\theta \in (0, \pi)$ is the angular frequency, and $\{\varepsilon_t\}$ is a sequence of random errors. The precise regularity assumptions are introduced in Section~\ref{sec:background}; however, it is well known that assuming heavy-tailed distributions for the error terms yields a more robust model fit \citep{wang2024robustestimationhighdimensionaltime}.

Classical approaches to estimating the parameters of such models are predominantly based on Gaussian-noise assumptions and exploit least-squares, maximum-likelihood, or second-order statistical formulations, owing to the differentiability and tractability of the resulting objective functions and subspace methods \citep{Stoica2005, MUSIC, roy1989esprit, hua1990matrixpencil, li1996relax2, rife1974single, quinn1994estimating, kay1989fast}. However, LS estimators are known to be highly sensitive to deviations from Gaussian noise assumptions, particularly in the presence of heavy-tailed errors or outliers, settings that frequently arise in real-world signal data \citep{Huber1964}. This has led to increasing interest in robust estimation techniques, especially those based on least absolute deviations (LAD), which provide improved stability under non-ideal noise conditions. Robust sinusoidal parameter estimation has been studied in several contexts (see, e.g., \cite{kundu2024robust, kundu2, RenSinEstimation, SolakSinEstimation, AndoSinEstimation, DFTSinEst, tang2013atomic, katkovnik1998robust, li2010nonlinear}), though the primary focus has remained on least-squares formulations or asymptotic analysis under specific model assumptions.

Hence, in this work, we consider estimation of the parameters $(A^0, B^0, \theta^0)$ via direct minimization of the LAD objective
\begin{equation}
\label{EQ: Main-lad-objective}
Q_n(A, B, \theta) = \frac{1}{n} \sum_{t=1}^n \left| y_t - A \sin(\theta t) - B \cos(\theta t) \right|.
\end{equation}
While this objective offers robustness advantages, it presents significant computational challenges: it is convex in $(A, B)$ for fixed $\theta$, but highly non-convex in $\theta$, resulting in a difficult joint optimization problem. To the best of our knowledge, existing methods do not directly target this LAD formulation for sinusoidal regression, instead relying on least-squares-based periodogram methods or approximations that sacrifice robustness \citep{kundu2024robust, Stoica2005, LaplacePeriodogram}.

We propose a simple, modular algorithm for LAD-based estimation in sinusoidal models, based on an alternating minimization framework. The method exploits the partial convexity of the objective by iteratively updating the amplitude parameters $(A, B)$ via exact coordinate-wise LAD updates \citep{naik2026coordinatedescentalgorithmabsolute}, and the frequency parameter $\theta$ via direct minimization of the LAD objective. The latter is performed using a periodogram-inspired grid search combined with local refinement \citep{LaplacePeriodogram}, leveraging the model's oscillatory structure to efficiently explore the non-convex parameter space. This coordinate descent approach ensures monotone descent of the objective and is straightforward to implement \citep{TsengP}. We establish consistency and asymptotic normality of the proposed estimator under the LAD objective in Eq.~\eqref{EQ: Main-lad-objective}, building on and adapting the framework of \cite{roy2023consistencyasymptoticnormalityabsolute}.

Although our theoretical analysis focuses on the single-frequency model, the proposed framework extends naturally to multi-frequency settings of the form
\begin{equation}
y_t = \sum_{j=1}^p \left( A_j^0 \sin(\theta_j^0 t) + B_j^0 \cos(\theta_j^0 t) \right) + \varepsilon_t,
\end{equation}
where the same alternating structure can be applied component-wise. This highlights the approach's flexibility and its applicability to more complex signal structures. The theoretical results on consistency and asymptotic normality also extend to multiple frequency models, as discussed in Section~\ref{sec:background}.

We evaluate the proposed method on both synthetic datasets and real-world time series exhibiting different types of periodic behavior, including the Mauna Loa atmospheric CO$_2$ dataset \citep{Keeling1976}, the Air-Passengers dataset \citep{box2015time}, and the UK Driver-Deaths dataset \citep{harvey1990forecasting}. On synthetic data, we study performance as the number of components $p$ increases, including overparameterized regimes, and examine the role of initialization. We further conduct an outlier analysis on the Mauna Loa dataset by contaminating it with increasing proportions of heavy-tailed noise and compare the proposed LAD estimator with least-squares (LS) alternatives. These experiments show that the method remains stable as model complexity increases and is substantially more robust to outliers than LS-based approaches.

The remainder of the paper is organized as follows. Section~\ref{sec:background} reviews the univariate optimization results and the LAD reformulation of the periodogram idea that underpins coordinate-wise updates. Section~\ref{sec:methodology} introduces the proposed coordinate descent algorithm, establishes its convergence properties, and presents computational optimizations. Section~\ref{sec:experiments} reports empirical results on synthetic and real-world datasets, including high-dimensional and outlier-contaminated settings. Section~\ref{sec:conclusion} concludes with a discussion of practical implications and directions for future work.

\section{Background}
\label{sec:background}

In this section, we briefly review key properties of the least absolute deviations (LAD) objective that enable efficient optimization in the sinusoidal regression setting. In particular, we exploit the fact that, conditional on the frequency parameter, the model is linear in the amplitude parameters, allowing exact updates via median-based arguments. We then discuss how the classical periodogram intuition can be adapted to the LAD setting for frequency estimation.

\subsection{Univariate LAD Minimization}

We begin with two standard results for LAD minimization in univariate settings.

\begin{lemma}
Let $\{y_1, \dots, y_n\}$ be given. Then any minimizer of
\begin{equation}
\min_{a \in \mathbb{R}} \sum_{i=1}^{n} |y_i - a|
\end{equation}
is a sample median of $\{y_i\}$.
\end{lemma}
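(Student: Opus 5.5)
The plan is to study the function $f(a)=\sum_{i=1}^n |y_i-a|$ directly through its one-sided derivatives. Sort the data as $y_{(1)}\le \dots \le y_{(n)}$, and define a \emph{sample median} in the usual way: any $a$ with
\[
\#\{i: y_i<a\}\le n/2 \quad\text{and}\quad \#\{i: y_i>a\}\le n/2 .
\]
When $n$ is odd this gives the single point $y_{((n+1)/2)}$. When $n$ is even it gives the closed interval $[y_{(n/2)},y_{(n/2+1)}]$. The goal is to show that every minimizer of $f$ satisfies both counting conditions.

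The first step is structural. Each summand $|y_i-a|$ is convex and piecewise linear in $a$, so $f$ is convex and piecewise linear, with kinks only at the data points. Moreover $f(a)\to\infty$ as $|a|\to\infty$, so a minimizer exists. Because $f$ is convex, a point $a^*$ is a minimizer if and only if its right derivative is nonnegative and its left derivative is nonpositive, i.e.\ $f'_+(a^*)\ge 0\ge f'_-(a^*)$.

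The second step is to compute these one-sided derivatives by counting. Write $L(a)=\#\{i:y_i<a\}$, $E(a)=\#\{i:y_i=a\}$ and $G(a)=\#\{i:y_i>a\}$. A term with $y_i<a$ has slope $+1$, a term with $y_i>a$ has slope $-1$, and a term with $y_i=a$ contributes $+1$ to the right derivative and $-1$ to the left derivative. This gives
\[
f'_+(a)=L(a)+E(a)-G(a) = n-2G(a),\qquad f'_-(a)=L(a)-E(a)-G(a)=2L(a)-n .
\]

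The third step reads off the conclusion. The optimality condition $f'_+(a^*)\ge 0$ is equivalent to $G(a^*)\le n/2$. The condition $f'_-(a^*)\le 0$ is equivalent to $L(a^*)\le n/2$. These are exactly the two conditions defining a sample median, so every minimizer is a sample median. The converse also holds, although the lemma does not require it.

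The only delicate point is the treatment of ties and of the even-$n$ case. That is why I would work with one-sided derivatives and the counts $L$, $E$, $G$, rather than argue by pairing $y_{(i)}$ with $y_{(n+1-i)}$. The counting identities handle repeated values automatically.
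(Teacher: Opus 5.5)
Your proof is correct and follows the route the paper indicates. The paper only invokes subgradient optimality conditions without detail, and your condition $f'_-(a^*)\le 0\le f'_+(a^*)$, i.e.\ $0\in\partial f(a^*)=[2L(a^*)-n,\,n-2G(a^*)]$, is exactly that condition made explicit, with the $L,E,G$ counting correctly handling ties and the even-$n$ case.
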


\begin{lemma}
Let $\{(x_i, y_i)\}_{i=1}^n$ be given with $x_i \neq 0$. Then any minimizer of
\begin{equation}
\min_{b \in \mathbb{R}} \sum_{i=1}^{n} |y_i - b x_i|
\end{equation}
is a weighted median of $\{y_i/x_i\}$ with weights $|x_i|$.
\end{lemma}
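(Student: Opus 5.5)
The plan is to reduce the weighted problem to the structure of the first lemma by factoring out $|x_i|$. Since $x_i \neq 0$, each term can be rewritten as
\begin{equation*}
|y_i - b x_i| = |x_i|\,\left| \frac{y_i}{x_i} - b \right|,
\end{equation*}
so the objective becomes $F(b) = \sum_{i=1}^n w_i |z_i - b|$ with $z_i = y_i/x_i$ and $w_i = |x_i| > 0$. The claim then amounts to showing that every minimizer of $F$ is a weighted median of the $z_i$ with weights $w_i$. Here a weighted median means a point $m$ satisfying
\begin{equation*}
\sum_{i: z_i < m} w_i \le \tfrac{1}{2}W \quad \text{and} \quad \sum_{i: z_i > m} w_i \le \tfrac{1}{2}W, \qquad W = \sum_{i=1}^n w_i.
\end{equation*}

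Next I would exploit convexity and piecewise linearity. The function $F$ is convex, being a nonnegative combination of convex functions. It is also piecewise linear with breakpoints at the $z_i$, and it is coercive, since $F(b) \to \infty$ as $|b| \to \infty$ because $W > 0$. Hence a minimizer exists, and $b^*$ is a minimizer if and only if $0 \in \partial F(b^*)$.

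The core step is computing the subdifferential. Writing $L(b) = \sum_{z_i < b} w_i$, $R(b) = \sum_{z_i > b} w_i$, and $E(b) = \sum_{z_i = b} w_i$, we get
\begin{equation*}
\partial F(b) = \bigl[\, L(b) - R(b) - E(b),\; L(b) - R(b) + E(b) \,\bigr].
\end{equation*}
The condition $0 \in \partial F(b)$ is then $|L(b) - R(b)| \le E(b)$. Using $L + R + E = W$, this is equivalent to the pair of inequalities $L(b) \le W/2$ and $R(b) \le W/2$, which is exactly the weighted median condition. This verification is the only real bookkeeping in the argument: one adds $E = W - L - R$ to the inequality $L - R \le E$ to obtain $2L \le W$, and symmetrically obtains $2R \le W$.

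No substantial obstacle is expected. The points needing care are the handling of ties, where several $z_i$ coincide and their weights simply aggregate into $E(b)$, and the non-uniqueness case where some partial sum equals $W/2$ exactly; there every point of the resulting interval qualifies as a weighted median. If one prefers to avoid subgradients, the fallback is an elementary argument: compare one-sided derivatives $F'(b^{\pm}) = L - R \pm E$ and argue that $F$ is nonincreasing to the left of the weighted median set and nondecreasing to its right. As a consistency check, the first lemma is the special case $w_i \equiv 1$.
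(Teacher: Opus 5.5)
Your proof is correct and takes essentially the same approach as the paper, which only remarks that the result follows from subgradient optimality conditions, as in standard quantile regression theory. Your rescaling $|y_i - b x_i| = |x_i|\,|y_i/x_i - b|$, the subdifferential $\partial F(b) = [L-R-E,\; L-R+E]$, and the equivalence of $|L-R| \le E$ with the pair $L \le W/2$, $R \le W/2$ supply the details that remark leaves implicit.
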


Both results follow from subgradient optimality conditions and are standard consequences of quantile regression theory \citep{koenker1978}.

\subsection{Application to the Sinusoidal Model}

Consider the sinusoidal model
\[
y_t = A^0 \sin(\theta^0 t) + B^0 \cos(\theta^0 t) + \varepsilon_t.
\]
Fixing $\theta$, the LAD objective in Eq.~\eqref{EQ: Main-lad-objective} reduces to
\[
\min_{A,B} \sum_{t=1}^n \left| y_t - A \sin(\theta t) - B \cos(\theta t) \right|.
\]
While the joint minimization over $(A,B)$ does not admit a closed form, coordinate-wise updates follow directly from the previous lemmas. Fixing $B$, define
\[
r_t^{(A)} = y_t - B \cos(\theta t).
\]
Then $A$ minimizes
\[
\sum_{t=1}^n \left| r_t^{(A)} - A \sin(\theta t) \right|,
\]
and is given by the weighted median of
\[
\left\{ \frac{r_t^{(A)}}{\sin(\theta t)} \right\}_{t=1}^n
\quad \text{with weights } |\sin(\theta t)|.
\]
Similarly, fixing $A$ and defining
\[
r_t^{(B)} = y_t - A \sin(\theta t),
\]
the update for $B$ is the weighted median of
\[
\left\{ \frac{r_t^{(B)}}{\cos(\theta t)} \right\}_{t=1}^n
\quad \text{with weights } |\cos(\theta t)|.
\]
These updates provide exact minimizers along each coordinate and form the basis of a coordinate descent procedure for $(A,B)$, conditional on $\theta$.

\subsection{LAD Reformulation of the Periodogram}

We now consider estimation of the frequency parameter $\theta$. For fixed $(A,B)$, define
\[
f_t(\theta) = A \sin(\theta t) + B \cos(\theta t).
\]
To understand the structure of the LAD objective, consider the discrepancy for some estimated $\hat\theta$:
\[
f_t(\theta) - f_t(\hat{\theta}) = A\big(\sin(\theta t) - \sin(\hat{\theta} t)\big) + B\big(\cos(\theta t) - \cos(\hat{\theta} t)\big).
\]
Using sum-to-product identities,
\[
\sin(\theta t) - \sin(\hat{\theta} t)
= 2 \cos\left(\frac{\theta + \hat{\theta}}{2} t\right)
  \sin\left(\frac{\theta - \hat{\theta}}{2} t\right),
\]
\[
\cos(\theta t) - \cos(\hat{\theta} t)
= -2 \sin\left(\frac{\theta + \hat{\theta}}{2} t\right)
   \sin\left(\frac{\theta - \hat{\theta}}{2} t\right).
\]
Substituting,
\[
f_t(\theta) - f_t(\hat{\theta})
= 2 \sin\left(\frac{\theta - \hat{\theta}}{2} t\right)
\underbrace{
\left[
A \cos\left(\frac{\theta + \hat{\theta}}{2} t\right)
- B \sin\left(\frac{\theta + \hat{\theta}}{2} t\right)
\right]}_{ = \varrho}.
\]
This representation separates the discrepancy into two components: a modulation term
\[
\sin\left(\frac{\theta - \hat{\theta}}{2} t\right),
\]
which vanishes systematically only when $\hat{\theta} = \theta$, and an oscillatory envelope involving $\frac{\theta + \hat{\theta}}{2}$.

The key observation is that the modulation term equals zero for all $t$ if and only if $\hat{\theta} = \theta$ (modulo periodic equivalence). For $\hat{\theta} \neq \theta$, it vanishes only on a sparse set of indices and remains nonzero for a set of $t$ with positive density. Since the underlined term ($\varrho$) is bounded and non-degenerate whenever $(A,B) \neq (0,0)$, it follows that
\[
\frac{1}{n} \sum_{t=1}^n \left| f_t(\theta) - f_t(\hat{\theta}) \right| > 0
\quad \text{for all } \hat{\theta} \neq \theta.
\]

This establishes identifiability of the frequency parameter under the LAD criterion. In particular, the LAD objective exhibits a periodogram-like structure, where global minima occur at the true frequency, while spurious cancellations remain local and do not persist across $t$.

Consequently, it is natural to estimate $\theta$ via a grid search over a dense set of candidate frequencies, followed by local refinement. The oscillatory structure ensures that incorrect frequencies incur a non-vanishing average discrepancy, allowing the true frequency to be isolated by directly evaluating the LAD objective.

\subsection{Consistency of the LAD Estimator}

We consider the model
\[
y_t = A^0 \sin(\theta^0 t) + B^0 \cos(\theta^0 t) + \varepsilon_t,
\qquad t = 1,\dots,n,
\]
where we assume that $(A^0,B^0) \in K \subset \mathbb{R}^2$ is compact, $\theta^0 \in (0,\pi)$,
and $\{\varepsilon_t\}$ are i.i.d.\ random variables with distribution function $F$
satisfying $F(0)=1/2$, continuous and bounded density $f$, and
$\mathbb{E}|\varepsilon_t|<\infty$.

Define the LAD objective
\[
Q_n(A,B,\theta)
=
\frac{1}{n}\sum_{t=1}^n
\bigl|y_t - A \sin(\theta t) - B \cos(\theta t)\bigr|,
\]
and let
\[
(\hat A_n,\hat B_n,\hat\theta_n)
=
\arg\min_{(A,B,\theta)\in\Theta}
Q_n(A,B,\theta),
\quad
\Theta = K \times (0,\pi).
\]

\begin{theorem}[Strong Consistency]
Under the stated assumptions,
\[
(\hat{A}_n, \hat{B}_n, \hat{\theta}_n)
\xrightarrow{\text{a.s.}}
(A^0, B^0, \theta^0).
\]
\end{theorem}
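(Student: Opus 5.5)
We follow the standard route for consistency of nonlinear $L_1$ estimators, adapted from \cite{roy2023consistencyasymptoticnormalityabsolute}. Since $\mathbb{E}|\varepsilon_t|$ may be infinite-free but the objective need not converge by itself, we work with the centred criterion
\[
R_n(A,B,\theta)=Q_n(A,B,\theta)-Q_n(A^0,B^0,\theta^0)
=\frac1n\sum_{t=1}^n\Bigl(|\varepsilon_t-g_t|-|\varepsilon_t|\Bigr),
\]
where $g_t=g_t(A,B,\theta)=A\sin(\theta t)+B\cos(\theta t)-A^0\sin(\theta^0 t)-B^0\cos(\theta^0 t)$. Each summand is bounded by $|g_t|\le C$ uniformly on $\Theta$, since $K$ is compact. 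Since $\hat\theta_n$ minimizes $Q_n$, we have $R_n(\hat A_n,\hat B_n,\hat\theta_n)\le 0$. It therefore suffices to show the following. For every $\delta>0$, with $S_\delta=\{(A,B,\theta)\in\Theta:\|(A,B,\theta)-(A^0,B^0,\theta^0)\|\ge\delta\}$,
\[
\liminf_{n\to\infty}\inf_{S_\delta}R_n(A,B,\theta)>0\quad\text{a.s.}
\]
This forces the minimizer into the $\delta$-ball eventually, almost surely.

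\textbf{Step 1 (mean function).} Define $h(x)=\mathbb{E}\bigl(|\varepsilon-x|-|\varepsilon|\bigr)=2\int_0^x(F(s)-\tfrac12)\,ds$. This is nonnegative and convex, vanishes only at $0$ near the origin, and satisfies $h(x)\ge c\min(x^2,|x|)$ on $|x|\le C$ because $f$ is continuous with $F(0)=1/2$. We need positivity of $f$ near $0$; if necessary we add this to the assumptions, as is standard. Then $\mathbb{E}R_n=\frac1n\sum_t h(g_t)\ge \frac{c'}{n}\sum_t g_t^2$.

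\textbf{Step 2 (deterministic separation).} We show $\liminf_n\inf_{S_\delta}\frac1n\sum_t g_t^2>0$, making quantitative the identifiability discussion of the previous subsection. Using $\frac1n\sum_t\cos(\omega t)\to0$ uniformly for $\omega$ bounded away from $0$ and $2\pi$, we split $S_\delta$ into two regions. When $|\theta-\theta^0|\ge\delta'$, the sum tends to $\tfrac12(A^2+B^2)+\tfrac12(A^{0\,2}+B^{0\,2})\ge \tfrac12(A^{0\,2}+B^{0\,2})>0$; here we assume $(A^0,B^0)\ne(0,0)$. When $|\theta-\theta^0|<\delta'$, the parameters $(A,B)$ are far from $(A^0,B^0)$ and continuity gives the bound. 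The main obstacle is the boundary behaviour of $\theta$ near $0$ and $\pi$, together with the regime $|\theta-\theta^0|\asymp 1/n$, where the Dirichlet-kernel sums do not vanish. For the latter, we use the exact Fejér-type bound $\bigl|\frac1n\sum_t e^{i\omega t}\bigr|\le \min\bigl(1,\frac{1}{n|\sin(\omega/2)|}\bigr)$. The cross term is then at most $|A^0A|+|B^0B|$ times this quantity. When $\theta$ is close to $\theta^0$, we instead directly compare with $(A^0,B^0)$ via a continuity argument in which $\theta$ enters only through $\theta t$; here we restrict to $|\theta-\theta^0|\gg 1/n$ or treat the near-collision region by the amplitude separation. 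At the endpoints, $\sin(\theta t)$ degenerates, but $\frac1n\sum g_t^2$ still contains the $\tfrac12\|(A^0,B^0)\|^2$ term from the true frequency, which is orthogonal asymptotically.

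\textbf{Step 3 (uniform SLLN).} We show $\sup_{\Theta}|R_n-\mathbb{E}R_n|\to0$ almost surely. The summands $Z_t(\phi)=|\varepsilon_t-g_t(\phi)|-|\varepsilon_t|-h(g_t(\phi))$ are independent, mean zero, and bounded by $2C$, so Hoeffding gives $P(|\frac1n\sum Z_t(\phi)|>\eta)\le 2e^{-n\eta^2/8C^2}$ at each fixed $\phi$. For uniformity, note that $Z_t$ is Lipschitz in $g_t$ with constant $2$, and $g_t$ is Lipschitz in $\theta$ with constant $\asymp t\le n$. A grid of mesh $\eta/n^2$ in $\theta$ and $\eta$ in $(A,B)$ therefore has polynomially many points, and a union bound with Borel--Cantelli gives almost sure uniform convergence. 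Combining Steps 1--3 yields the displayed $\liminf$ and hence strong consistency.
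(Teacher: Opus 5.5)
Your route differs from the paper's, and it is better founded. The paper cites an i.i.d.\ compact-family ULLN to get $\sup_\Theta|H_n-\mathbb{E}H_n|\to0$, asserts $H_n\to Q$ uniformly, and then checks only pointwise that $Q>0$ away from the truth. You aim directly at $\liminf_n\inf_{S_\delta}R_n>0$, using $h(x)\ge c\min(x^2,|x|)$ and a Hoeffding--grid--Borel--Cantelli uniform law. Steps 1 and 3 are correct, and they avoid three weak points of the paper. First, its bound $\mathbb{E}\bigl[|\varepsilon+h|-|\varepsilon|\bigr]\ge c|h|$ is false for small $h$, where the mean is quadratic. Second, the summands are independent but not identically distributed, and $\theta$ ranges over the open interval $(0,\pi)$, so the cited ULLN does not apply; your covering argument handles the order-$n$ Lipschitz constant in $\theta$ explicitly. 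Third, $\mathbb{E}H_n$ is continuous in $\theta$ for each $n$, but its pointwise limit jumps at $\theta^0$, so uniform convergence to $Q$ is impossible. Your extra hypotheses ($f>0$ near $0$, $(A^0,B^0)\neq(0,0)$) are tacitly needed by the paper too.

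The gap is in Step 2, for $|\theta-\theta^0|<\delta'$, which is where the real difficulty lies. The region $|\theta-\theta^0|\ge\delta'$ is fine if you use $\frac1n\sum_t(A\sin\theta t+B\cos\theta t)^2\ge0$ rather than its limit, which is not uniform near $0$ and $\pi$. But ``continuity gives the bound'' fails uniformly in $n$, because $\partial_\theta g_t$ is of order $t$. You flag the regime $n|\theta-\theta^0|\asymp1$, but the tools you name do not close it:
the bound $\min\bigl(1,1/(n|\sin(\omega/2)|)\bigr)$ equals $1$ whenever $n|\omega|\le2$.
Also, with $\phi=\theta-\theta^0$, the cross term is not bounded by $|AA^0|+|BB^0|$ times the kernel, since the $\frac1n\sum_t\sin(\phi t)$ part carries the coefficient $AB^0-BA^0$. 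The correct constant is $\sqrt{(A^2+B^2)((A^0)^2+(B^0)^2)}$. With kernel bound $1$ this gives only $\tfrac12\bigl(\sqrt{A^2+B^2}-\sqrt{(A^0)^2+(B^0)^2}\bigr)^2$, which vanishes when $(A,B)$ is a rotation of $(A^0,B^0)$. So $n|\phi|\in[\epsilon,M]$ is left uncovered.

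One fix is to split at $n|\phi|=\epsilon$. Take $\delta'<\tfrac12\min(\theta^0,\pi-\theta^0)$, so the sums at frequencies $2\theta$, $\theta+\theta^0$, $2\theta^0$ are uniformly $O(n^{-1})$. Then with $D_n(\phi)=\frac1n\sum_{t=1}^n e^{i\phi t}$,
\[
\frac1n\sum_{t=1}^n g_t^2\;\ge\;\tfrac12(A^2+B^2)+\tfrac12\bigl((A^0)^2+(B^0)^2\bigr)-\sqrt{(A^2+B^2)\bigl((A^0)^2+(B^0)^2\bigr)}\,|D_n(\phi)|-O(n^{-1}).
\]
The exact kernel $|D_n(\phi)|=|\sin(n\phi/2)|/(n|\sin(\phi/2)|)$ gives $\sup_{\epsilon/n\le|\phi|<\delta'}|D_n(\phi)|\le\kappa(\epsilon)<1$ for small $\delta'$. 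Write $a=\sqrt{A^2+B^2}$ and $b=\sqrt{(A^0)^2+(B^0)^2}$. Since $\tfrac12(a^2+b^2)-\kappa ab\ge\tfrac{1-\kappa}{2}b^2$, the sum is at least $\tfrac{1-\kappa}{2}b^2-o(1)$ there, with no amplitude condition needed. For $n|\phi|<\epsilon$, you have $|g_t-(A-A^0)\sin(\theta^0t)-(B-B^0)\cos(\theta^0t)|\le(|A|+|B|)\epsilon$ for all $t\le n$. The amplitude separation forced by $S_\delta$ then finishes the argument. With this replacement your proof is complete. The paper's Case 2 does not supply this uniform bound either; it only asserts pointwise positivity.
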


\begin{proof}

Presented in the Appendix

\end{proof}

\subsection{Asymptotic Normality of the LAD Estimator}

We consider the model
\[
y_t = A^0 \sin(\theta^0 t) + B^0 \cos(\theta^0 t) + \varepsilon_t,
\qquad t=1,\dots,n,
\]
where $(A^0,B^0) \in K \subset \mathbb{R}^2$ is compact, $\theta_0 \in (0,\pi)$, and
$\{\varepsilon_t\}$ are i.i.d.\ random variables with median zero, continuous
density $f$, and $f(0)>0$.

Define the residual
\[
r_t(A,B,\theta) = y_t - A \sin(\theta t) - B \cos(\theta t),
\]
and the LAD objective
\[
Q_n(A,B,\theta)
=
\frac{1}{n}\sum_{t=1}^n |r_t(A,B,\theta)|.
\]
Let $(\hat A_n,\hat B_n,\hat\theta_n)$ denote the LAD estimator, and suppose that
\[
(\hat A_n,\hat B_n,\hat\theta_n) \to (A^0,B^0,\theta^0)
\quad \text{almost surely}.
\]

\begin{theorem}[Asymptotic Normality]
Under the stated assumptions, and assuming
\[
(\hat A_n,\hat B_n,\hat\theta_n)
\to (A^0,B^0,\theta^0)
\quad \text{almost surely},
\]
we have
\[
\begin{pmatrix}
\sqrt n (\hat A_n-A^0) \\
\sqrt n (\hat B_n-B^0) \\
n^{3/2}(\hat\theta_n-\theta^0)
\end{pmatrix}
\xrightarrow{d}
\mathcal N\!\left(
0,\,
\frac{1}{4f(0)^2}
\begin{pmatrix}
2 & 0 & 0 \\
0 & 2 & 0 \\
0 & 0 & \dfrac{6}{(A^0)^2 + (B^0)^2}
\end{pmatrix}
\right).
\]
\end{theorem}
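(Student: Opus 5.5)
We work in the rescaled local parameter
\[
u=(u_1,u_2,u_3)=\bigl(\sqrt n(A-A^0),\ \sqrt n(B-B^0),\ n^{3/2}(\theta-\theta^0)\bigr),
\]
and study the convex-in-the-limit process
\[
Z_n(u)=\sum_{t=1}^n\Bigl(|\varepsilon_t-d_t(u)|-|\varepsilon_t|\Bigr),
\qquad
d_t(u)=A\sin(\theta t)+B\cos(\theta t)-A^0\sin(\theta^0 t)-B^0\cos(\theta^0 t).
\]
The estimator corresponds to $\hat u_n=\arg\min Z_n$. The strategy follows \cite{roy2023consistencyasymptoticnormalityabsolute} and the Pollard/Knight argmin approach for nonlinear LAD.

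\textbf{Step 1: linearize $d_t$.} Taylor expansion around $(A^0,B^0,\theta^0)$ gives
\[
d_t(u)=n^{-1/2}x_{nt}^\top u+R_{nt}(u),\qquad
x_{nt}=\Bigl(\sin(\theta^0t),\ \cos(\theta^0t),\ \tfrac tn\bigl(A^0\cos(\theta^0t)-B^0\sin(\theta^0t)\bigr)\Bigr)^\top .
\]
The remainder is $R_{nt}=O(\|u\|^2 t^2 n^{-3})+O(\|u\|^2 t n^{-2})$. It is $O(n^{-1})$ uniformly over $\|u\|\le M$ and $t\le n$, so $\sum_t R_{nt}^2\to0$ and $\sum_t|R_{nt}|=O(1)$.

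Using the standard trigonometric sums for $\theta^0\in(0,\pi)$,
\[
\frac1n\sum_t t^k\sin^2(\theta^0t)\,n^{-k}\to\frac{1}{2(k+1)},\qquad
\frac1n\sum_t t^k\sin(\theta^0t)\cos(\theta^0t)\,n^{-k}\to 0,
\]
we obtain $\Sigma_n:=\frac1n\sum_t x_{nt}x_{nt}^\top\to\Sigma$, where
\[
\Sigma=\begin{pmatrix}1/2&0&B^0/4\\ 0&1/2&-A^0/4\\ B^0/4&-A^0/4&(A^{0\,2}+B^{0\,2})/6\end{pmatrix}.
\]
The off-diagonal entries come from $\frac1n\sum (t/n)\sin(\theta^0t)\bigl(A^0\cos(\theta^0t)-B^0\sin(\theta^0t)\bigr)\to -B^0/4$, up to sign conventions.

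\textbf{Step 2: quadratic approximation.} Apply Knight's identity
\[
|\varepsilon-d|-|\varepsilon|=-d\,\mathrm{sgn}(\varepsilon)+2\int_0^d\bigl(\mathbf 1\{\varepsilon\le s\}-\mathbf 1\{\varepsilon\le0\}\bigr)\,ds .
\]
The linear term is $-W_n^\top u$ plus a negligible remainder, where
\[
W_n=n^{-1/2}\sum_t \mathrm{sgn}(\varepsilon_t)\,x_{nt}\xrightarrow{d}\mathcal N(0,\Sigma)
\]
by the Lindeberg CLT ($x_{nt}$ is bounded and $\mathrm{Var}\,\mathrm{sgn}(\varepsilon_t)=1$). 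The remainder $\sum_t \mathrm{sgn}(\varepsilon_t)R_{nt}$ has mean zero and variance $\sum_t R_{nt}^2\to0$. For the integral term, take expectations using $F(s)-F(0)=f(0)s+o(s)$: the mean tends to $f(0)\,u^\top\Sigma u$, and the variance is bounded by $\sum_t E\bigl|\int_0^{d_t}\cdots\bigr|^2\le \max_t|d_t|\sum_t|d_t|\to0$. Hence
\[
Z_n(u)=-W_n^\top u+f(0)\,u^\top\Sigma u+o_p(1)
\]
pointwise in $u$.

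\textbf{Step 3: argmin and covariance.} Each $Z_n$ is not convex in $u_3$, so the convexity lemma of Pollard does not apply directly. This is the main obstacle. Strong consistency gives $\hat\theta_n-\theta^0\to0$, but we need $\hat u_n=O_p(1)$, i.e.\ the rates $n^{-1/2}$ and $n^{-3/2}$.

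To get these rates we first show uniformity of the approximation over $\|u\|\le M$, via stochastic equicontinuity: $Z_n$ is Lipschitz in $u$ on compacts with an $O_p(1)$ constant, because the derivatives of $d_t$ in $u$ are $O(n^{-1/2})$ uniformly. Then we establish the rate by a peeling argument on shells $\{2^{j}\le\|u\|\le 2^{j+1}\}$. On these shells we use
\[
E Z_n(u)\gtrsim \min\Bigl(\|u\|^2,\ \sqrt n\,\|u\|\,\kappa\Bigr)
\]
from the identifiability bound of Section~\ref{sec:background}, together with a maximal inequality for the centered process. A consistency-in-$\theta$ refinement, $n(\hat\theta_n-\theta^0)\to0$ a.s., following the Roy et al.\ argument, is needed first to handle the region where the linearization fails.

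Given $\hat u_n=O_p(1)$, the argmin continuous mapping theorem yields
\[
\hat u_n\xrightarrow{d}\frac{1}{2f(0)}\Sigma^{-1}\mathcal N(0,\Sigma)=\mathcal N\Bigl(0,\ \frac{1}{4f(0)^2}\Sigma^{-1}\Bigr).
\]

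Finally we compute $\Sigma^{-1}$. The $(3,3)$ Schur complement is
\[
\frac{A^{0\,2}+B^{0\,2}}{6}-\frac{A^{0\,2}+B^{0\,2}}{8}=\frac{A^{0\,2}+B^{0\,2}}{24}.
\]
The exact $\Sigma^{-1}$ therefore has correlated entries and $(3,3)$ element $24/(A^{0\,2}+B^{0\,2})$, which would differ from the stated diagonal form. We would check whether the intended parametrization (e.g.\ centering $t$ at $n/2$, which makes the cross terms vanish) yields the stated diagonal matrix. With centered time, the $(3,3)$ entry of $\Sigma$ becomes $(A^{0\,2}+B^{0\,2})/24$, and its inverse gives the entry $24/(A^{0\,2}+B^{0\,2})$. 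Reconciling this with the displayed factor $6/(A^{0\,2}+B^{0\,2})$ is a point we expect to need careful verification of constants.
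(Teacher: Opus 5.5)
Your Steps 1--2 are the same Knight-identity expansion the paper uses, and your $\Sigma$ is right. With your $x_{nt}$, in fact $\Sigma_{13}=-B^0/4$ and $\Sigma_{23}=A^0/4$; this sign only flips off-diagonal signs below. The discrepancy you flag at the end cannot be resolved, because the displayed covariance is wrong.

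The paper's proof gets it by asserting that the covariance of $W_n$ and the matrix $J/f(0)$ both equal $\mathrm{diag}(\tfrac12,\tfrac12,\rho/6)$, with $\rho=(A^0)^2+(B^0)^2$, ``using standard trigonometric averaging arguments.'' This drops the amplitude--frequency cross terms coming from $n^{-2}\sum_t t\sin^2(\theta^0 t)\to\tfrac14$ and $n^{-2}\sum_t t\cos^2(\theta^0 t)\to\tfrac14$. Inverting the resulting diagonal matrix is exactly what produces $\mathrm{diag}(2,2,6/\rho)$.

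With the correct $\Sigma$ one has $\det\Sigma=\rho/96$, and the limit is $\mathcal N\bigl(0,\tfrac{1}{4f(0)^2}\Sigma^{-1}\bigr)$ with
\[
\Sigma^{-1}=\frac{1}{\rho}
\begin{pmatrix}
2(A^0)^2+8(B^0)^2 & -6A^0B^0 & 12B^0\\
-6A^0B^0 & 8(A^0)^2+2(B^0)^2 & -12A^0\\
12B^0 & -12A^0 & 24
\end{pmatrix}.
\]
This is the classical least-squares covariance with $\sigma^2$ replaced by $1/(4f(0)^2)$. Your two computations of the frequency entry (Schur complement, and centered time) both give $24/\rho$, as they must. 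Centering only rotates $(A,B)$ by the angle $\theta(n+1)/2$, so it changes neither $\hat\theta_n$ nor its limiting variance $6/(f(0)^2\rho)$, which is four times the displayed value. In the statement's uncentered parametrization, the amplitude variances and all covariances are wrong as well, for every $(A^0,B^0)\neq(0,0)$. So the right move is to state and prove this corrected limit, not to try to match the displayed matrix.

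What remains open in your proposal is Step 3, which the paper's proof skips entirely. It goes from a pointwise quadratic expansion straight to the argmin, with no convexity in $u_3$ and no proof that $\hat u_n=O_p(1)$. You correctly identify this as the crux, but what you give is a plan, not an argument:
\begin{itemize}
\item The identifiability discussion in Section~\ref{sec:background} is purely qualitative (no uniformity, no rate). It cannot supply a lower bound like $EZ_n(u)\gtrsim\min(\|u\|^2,\sqrt n\|u\|\kappa)$.
\item The refinement $n(\hat\theta_n-\theta^0)\to0$ a.s.\ has to be proved, not invoked. One route is to show $\liminf_n\inf H_n>0$ a.s.\ over $\{|\theta-\theta^0|\ge c/n\}$, as in the least-squares theory.
\end{itemize}

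There are also two smaller slips:
\begin{itemize}
\item The bound $\partial_u d_t=O(n^{-1/2})$ alone only gives a Lipschitz constant $O(\sqrt n)$ for $Z_n$ on $\{\|u\|\le M\}$. An $O_p(1)$ constant needs the cancellation in $\sum_t\partial_u d_t\,\operatorname{sgn}(\varepsilon_t)$, together with $n^{-1/2}\sum_t\mathbf 1\{|\varepsilon_t|\le Cn^{-1/2}\}=O_p(1)$.
\item In Step 2, $\max_t|d_t|\sum_t|d_t|$ is $O(1)$, not $o(1)$. Since the integral term $X_t$ satisfies $0\le X_t\le|d_t|$, use instead $\sum_t\operatorname{Var}X_t\le\max_t|d_t|\sum_t EX_t=O(n^{-1/2})$.
\end{itemize}
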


\begin{proof}
    Presented in the Appendix
\end{proof}

This result extends Theorem~4.2 of \cite{roy2023consistencyasymptoticnormalityabsolute}
to the sine-cosine parameterization of the single-frequency LAD regression model.

\subsection{Extension to Multiple-Frequency Models}
We briefly state the corresponding asymptotic results for the multi-frequency model
\[
y_t
=
\sum_{j=1}^p
\left(
A_j \sin(\theta_j t)
+
B_j \cos(\theta_j t)
\right)
+
\varepsilon_t,
\]
where the frequencies satisfy the identifiability condition
\[
\theta_j \neq \theta_k
\quad \text{for } j \neq k.
\]

Let
\[
\boldsymbol{\eta}
=
(A_1,B_1,\theta_1,\ldots,A_p,B_p,\theta_p)
\]
denote the parameter vector, and define the LAD objective
\[
Q_n(\boldsymbol{\eta})
=
\frac{1}{n}
\sum_{t=1}^n
\left|
y_t
-
\sum_{j=1}^p
\left(
A_j \sin(\theta_j t)
+
B_j \cos(\theta_j t)
\right)
\right|.
\]
Under the same assumptions as in the single-frequency setting, together with standard frequency-separation and identifiability conditions, the LAD estimator
\[
\hat{\boldsymbol{\eta}}_n
=
\arg\min_{\boldsymbol{\eta}} Q_n(\boldsymbol{\eta})
\]
is strongly consistent:
\[
\hat{\boldsymbol{\eta}}_n
\xrightarrow{\text{a.s.}}
\boldsymbol{\eta}_0.
\]

Furthermore, if the error density $f$ is continuous with $f(0)>0$, then the multi-frequency LAD estimator satisfies
\[
\sqrt{n}
\left(
\hat{A}_{1,n}-A_1,\,
\hat{B}_{1,n}-B_1,\,
\ldots,\,
\hat{A}_{p,n}-A_p,\,
\hat{B}_{p,n}-B_p
\right)
\xrightarrow{d}
N(0,\Sigma_A),
\]
while the frequency estimators satisfy
\[
n^{3/2}
\left(
\hat{\theta}_{1,n}-\theta_1,\,
\ldots,\,
\hat{\theta}_{p,n}-\theta_p
\right)
\xrightarrow{d}
N(0,\Sigma_\theta).
\]

Under standard frequency-separation conditions,
\[
\theta_i \neq \theta_j,
\qquad i\neq j,
\]
the trigonometric components corresponding to distinct frequencies become asymptotically orthogonal. In particular,
\[
\frac1n
\sum_{t=1}^n
\sin(\theta_i t)\sin(\theta_j t)
\to 0,
\qquad
\frac1n
\sum_{t=1}^n
\cos(\theta_i t)\cos(\theta_j t)
\to 0,
\]
and
\[
\frac1n
\sum_{t=1}^n
\sin(\theta_i t)\cos(\theta_j t)
\to 0,
\qquad i\neq j.
\]

Similarly, for the frequency-score terms,
\[
\frac1{n^3}
\sum_{t=1}^n
t^2
\cos(\theta_i t)\cos(\theta_j t)
\to 0,
\qquad i\neq j,
\]
with analogous relations for sine terms.

Consequently, the limiting information matrix assumes an asymptotically block-diagonal structure, implying that distinct sinusoidal components are asymptotically uncorrelated. The asymptotic covariance matrices are therefore determined component-wise by the corresponding single-frequency information matrices. The proofs follow by straightforward extension of the arguments used in the single-frequency case, together with standard trigonometric averaging results and nonlinear LAD M-estimation theory.

\section{Proposed Methodology}
\label{sec:methodology}

In this section, we present a coordinate descent algorithm for LAD estimation in sinusoidal models, along with its extension to multi-frequency settings. The proposed method exploits the partial convexity of the LAD objective: conditional on the frequency parameters, the problem is convex in the amplitude parameters and admits exact updates via weighted medians, while frequency estimation is handled via a direct minimization inspired by periodogram methods.

We begin with the single-frequency model and then extend the approach to the multi-frequency case. Practical considerations, including computational optimizations and initialization strategies, are discussed in subsequent subsections.

\subsection{Single-Frequency Model}

For the model
\[
y_t = A^0 \sin(\theta^0 t) + B^0 \cos(\theta^0 t) + \varepsilon_t,
\]
we seek to minimize the LAD objective in Eq.~\eqref{EQ: Main-lad-objective}. As discussed in Section~\ref{sec:background}, the objective is convex in $(A,B)$ for fixed $\theta$, but non-convex in $\theta$.

We therefore adopt an alternating minimization strategy. Given a current estimate of $\theta$, the amplitude parameters $(A,B)$ are updated via coordinate-wise LAD minimization using weighted medians. Given $(A,B)$, the frequency parameter $\theta$ is updated by minimizing the LAD objective over $\theta \in (0,\pi)$ using a grid search followed by local refinement.

\begin{algorithm}
\caption{LAD Estimation for Single-Frequency Sinusoidal Model}
\begin{algorithmic}[1]
\Require Observations $\{(t_i,y_i)\}_{i=1}^n$, initial values $(A^{(0)}, B^{(0)}, \theta^{(0)})$
\Ensure Estimated parameters $(\hat A, \hat B, \hat \theta)$
\State $k \gets 0$
\Repeat
\State $k \gets k + 1$
\State \textbf{Amplitude update:}
\State Update $A^{(k)}$ using weighted median of
\[
\left\{ \frac{y_t - B^{(k-1)}\cos(\theta^{(k-1)} t)}{\sin(\theta^{(k-1)} t)} \right\}
\]
with weights $|\sin(\theta^{(k-1)} t)|$
\State Update $B^{(k)}$ using weighted median of
\[
\left\{ \frac{y_t - A^{(k)}\sin(\theta^{(k-1)} t)}{\cos(\theta^{(k-1)} t)} \right\}
\]
with weights $|\cos(\theta^{(k-1)} t)|$
\State \textbf{Frequency update:}
\State Update $\theta^{(k)}$ by minimizing
\[
\sum_{t=1}^n \left| y_t - A^{(k)} \sin(\theta t) - B^{(k)} \cos(\theta t) \right|
\]
over $\theta \in (0,\pi)$ via grid search with local refinement
\Until convergence of $\theta^{(k)}$
\State \Return $(\hat A, \hat B, \hat \theta)$
\end{algorithmic}
\end{algorithm}

Each step minimizes the LAD objective along a coordinate block, ensuring monotone descent of the objective function.

\subsection{Extension to Multi-Frequency Models}

We now consider the model
\[
y_t = \sum_{j=1}^p \left( A_j \sin(\theta_j t) + B_j \cos(\theta_j t) \right) + \varepsilon_t.
\]
The corresponding LAD objective is
\[
\min_{\{A_j,B_j,\theta_j\}}
\frac{1}{n}\sum_{t=1}^n
\left|
y_t - \sum_{j=1}^p \left( A_j \sin(\theta_j t) + B_j \cos(\theta_j t) \right)
\right|.
\]

This objective remains convex in the amplitude parameters $\{A_j,B_j\}$ conditional on $\{\theta_j\}$, but is non-convex in the frequency parameters. We extend the alternating minimization framework by updating each component sequentially.

For each $j=1,\dots,p$, define the partial residual
\[
r_t^{(j)} = y_t - \sum_{\ell \neq j} \left( A_\ell \sin(\theta_\ell t) + B_\ell \cos(\theta_\ell t) \right).
\]

Conditional on $\{\theta_j\}$, the amplitude parameters are updated via weighted median steps:
\begin{itemize}
\item Update $A_j$ using $\{r_t^{(j)} / \sin(\theta_j t)\}$ with weights $|\sin(\theta_j t)|$
\item Update $B_j$ using $\{r_t^{(j)} / \cos(\theta_j t)\}$ with weights $|\cos(\theta_j t)|$
\end{itemize}

Conditional on $\{A_j,B_j\}$, each frequency $\theta_j$ is updated by minimizing
\[
\sum_{t=1}^n \left| r_t^{(j)} - A_j \sin(\theta t) - B_j \cos(\theta t) \right|
\]
over $\theta \in (0,\pi)$ using grid search with local refinement.

\begin{algorithm}[H]
\caption{LAD Estimation for Multi-Frequency Sinusoidal Model}
\begin{algorithmic}[1]
\Require Observations $\{(t_i,y_i)\}_{i=1}^n$, number of components $p$, initial values $\{A_j^{(0)},B_j^{(0)},\theta_j^{(0)}\}$
\Ensure Estimated parameters $\{A_j,B_j,\theta_j\}_{j=1}^p$
\State $k \gets 0$
\Repeat
\State $k \gets k + 1$
\For{$j = 1$ to $p$}
\State Compute partial residual $r_t^{(j)}$
\State Update $A_j^{(k)}$ via weighted median
\State Update $B_j^{(k)}$ via weighted median
\State Update $\theta_j^{(k)}$ via grid search with local refinement
\EndFor
\Until convergence
\State \Return $\{A_j,B_j,\theta_j\}_{j=1}^p$
\end{algorithmic}
\end{algorithm}

This extension preserves the structure of the single-frequency algorithm while enabling modeling of complex signals with multiple oscillatory components.

The non-convexity of the frequency parameters introduces sensitivity to initialization and computational challenges in practice. In the following subsections, we discuss efficient implementation strategies and principled initialization schemes to improve convergence and solution quality.

\subsection{Computational Considerations and Initialization}

The alternating minimization framework described above is straightforward to implement, but its efficiency and performance depend critically on computational optimizations and initialization strategies. We discuss both aspects here.

\paragraph{Global residuals for efficient updates.}
In the multi-frequency setting, recomputing partial residuals
\[
r_t^{(j)} = y_t - \sum_{\ell \neq j} \left( A_\ell \sin(\theta_\ell t) + B_\ell \cos(\theta_\ell t) \right)
\]
from scratch at each step incurs $\mathcal{O}(np)$ cost per coordinate update. To avoid this, we maintain a global fitted signal
\[
\hat y_t = \sum_{j=1}^p \left( A_j \sin(\theta_j t) + B_j \cos(\theta_j t) \right),
\]
and update residuals incrementally \citep{naik2026coordinatedescentalgorithmabsolute}. For each component $j$, we compute
\[
r_t^{(j)} = y_t - \hat y_t + \left( A_j \sin(\theta_j t) + B_j \cos(\theta_j t) \right),
\]
which allows reuse of previously computed quantities. After updating $(A_j,B_j,\theta_j)$, the global fit $\hat y_t$ is updated accordingly. This reduces the per-iteration complexity significantly and improves scalability for larger $p$.

\paragraph{Initialization of frequency parameters.}
The LAD objective is highly non-convex in the frequency parameters, and the algorithm may converge to different local minima depending on initialization. To ensure adequate coverage of the frequency domain, we initialize $\{\theta_j\}_{j=1}^p$ uniformly over $(0,\pi)$:
\[
\theta_j^{(0)} \approx \frac{j\pi}{p+1}, \qquad j=1,\dots,p.
\]
This choice spreads the initial frequencies across the admissible range and is consistent with standard discretization strategies used in spectral analysis and periodogram-based methods \citep{Stoica2005}.

In practice, it is often beneficial to choose $p$ larger than the anticipated number of dominant frequencies. The algorithm can then adapt by assigning negligible amplitudes to redundant components, effectively performing an implicit model selection. This overparameterization strategy improves robustness to initialization and reduces the risk of missing relevant frequency components.

\paragraph{Initialization of amplitude parameters.}
Given initial frequency estimates ${\theta_j}$, the amplitude parameters ${A_j,B_j}$ may be initialized either randomly or through a preliminary projection step. In particular, conditional on fixed frequencies, the sinusoidal model is linear in the amplitude parameters, allowing initial estimates to be obtained via least-squares projection onto the corresponding sinusoidal basis functions. Such initialization strategies are closely related to classical amplitude estimation methods in sinusoidal models, including least-squares and filter-bank based approaches, which are known to possess favorable statistical properties when the frequencies are reasonably well estimated \citep{StoicaAmpEstimation}. In our implementation, we initialize
\[
A_j^{(0)}, B_j^{(0)} \sim \mathrm{Unif}(\min_t y_t,\max_t y_t),
\]
which provides a simple data-adaptive starting point and avoids degenerate solutions. However, when computational resources permit, projection-based initializations obtained from least-squares or periodogram-derived frequency estimates may improve convergence and reduce sensitivity to local minima. The effects of initialization are examined empirically in Section~\ref{sec:experiments}.

\paragraph{Additional considerations.}
We note that several refinements can further improve performance:
\begin{itemize}
\item \textit{Multiple random restarts:} Running the algorithm from several independent initializations and selecting the solution with the lowest LAD objective helps mitigate local minima.
\item \textit{Adaptive frequency discretization:} Rather than performing optimization over a uniformly dense frequency grid, one may employ a hierarchical refinement strategy in which the LAD objective is first evaluated on a coarse discretization and then re-evaluated locally around candidate frequencies. Such adaptive grid selection reduces computational complexity and alleviates off-grid effects that arise in spectral estimation problems. The impact of discretization error and basis mismatch has been extensively studied in sparse spectral estimation and compressed sensing settings \citep{ChiGrid}, while adaptive refinement strategies have been proposed to mitigate these effects and improve frequency localization accuracy \citep{StoicaGridSelection}.
\item \textit{Component pruning:} Components with negligible amplitudes $(A_j^2 + B_j^2 \approx 0)$ can be removed during iterations to reduce model complexity.
\end{itemize}

These strategies are standard in non-convex optimization settings and are particularly effective for oscillatory models with multiple interacting components.

\section{Experiments}
\label{sec:experiments}
We evaluate the proposed LAD-based sinusoidal estimation method on both synthetic and real-world datasets. All experiments are implemented in \texttt{R}, and the code is made publicly available for reproducibility on our \href{https://github.com/Zehaan22/LAD_CD_sin_models}{GitHub repository}.

We consider three settings: (i) single-frequency synthetic data, (ii) multi-frequency synthetic data, and (iii) real-world environmental time series data.

\subsection{Synthetic Data Experiments}

We begin by evaluating the proposed LAD-based estimation procedure on synthetic datasets designed to systematically assess robustness and scalability. All experiments are conducted under controlled settings where the true data-generating process is known. We consider models of the form
\[
y_t = \sum_{j=1}^p \left( A_j^0 \sin(\theta_j^0 t) + B_j^0 \cos(\theta_j^0 t) \right) + \varepsilon_t,
\quad t = 1,\dots,n,
\]
with $n = 100$. The frequency parameters $\{\theta_j\}$ are sampled independently from $(0,\pi)$, and the amplitude parameters $\{A_j, B_j\}$ are drawn from a bounded interval. 

To evaluate robustness, we generate noise from a heavy-tailed distribution. In particular, we take
\[
\varepsilon_t \sim t_3,
\]
a Student-$t$ distribution with three degrees of freedom, which introduces substantial deviations from Gaussian assumptions and mimics realistic noise conditions encountered in practice.

We first consider the simplest case with a single sinusoidal component. In this setting, the proposed method reliably recovers the underlying signal structure, producing stable estimates of both amplitude and frequency parameters even in the presence of heavy-tailed noise. We next extend the analysis to multi-frequency models. For $p = 3$, the proposed coordinate descent algorithm successfully captures the superposition of oscillatory components. The use of alternating updates allows the method to decompose the signal into constituent frequencies. We compare the results with standard LSE fits to assess the effectiveness of the objective and descent strategy. We illustrate the fitted curves in Fig.~\ref{fig: syn_simple_models} and compare the final training mean absolute errors in Table~\ref{tab:syn_MAE_comp}. We can see that the LAD fits for the model compare well with the LSE fits and align closely with the true error added to the pure signal, indicating a reasonable fit. 

\begin{figure}
    \centering

    \begin{subfigure}{0.44\textwidth}
        \centering
        \includegraphics[width=\linewidth]{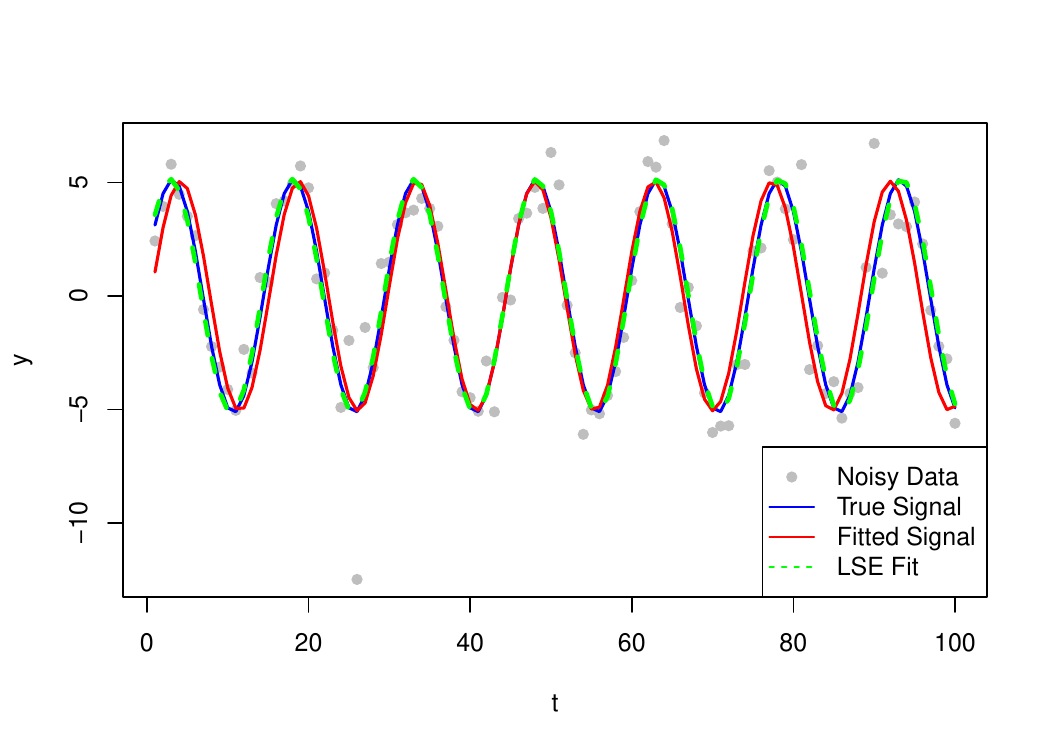}
        \caption{p = 1}
        \label{fig: syn_p1}
    \end{subfigure}
    \hfill
    \begin{subfigure}{0.51\textwidth}
        \centering
        \includegraphics[width=\linewidth]{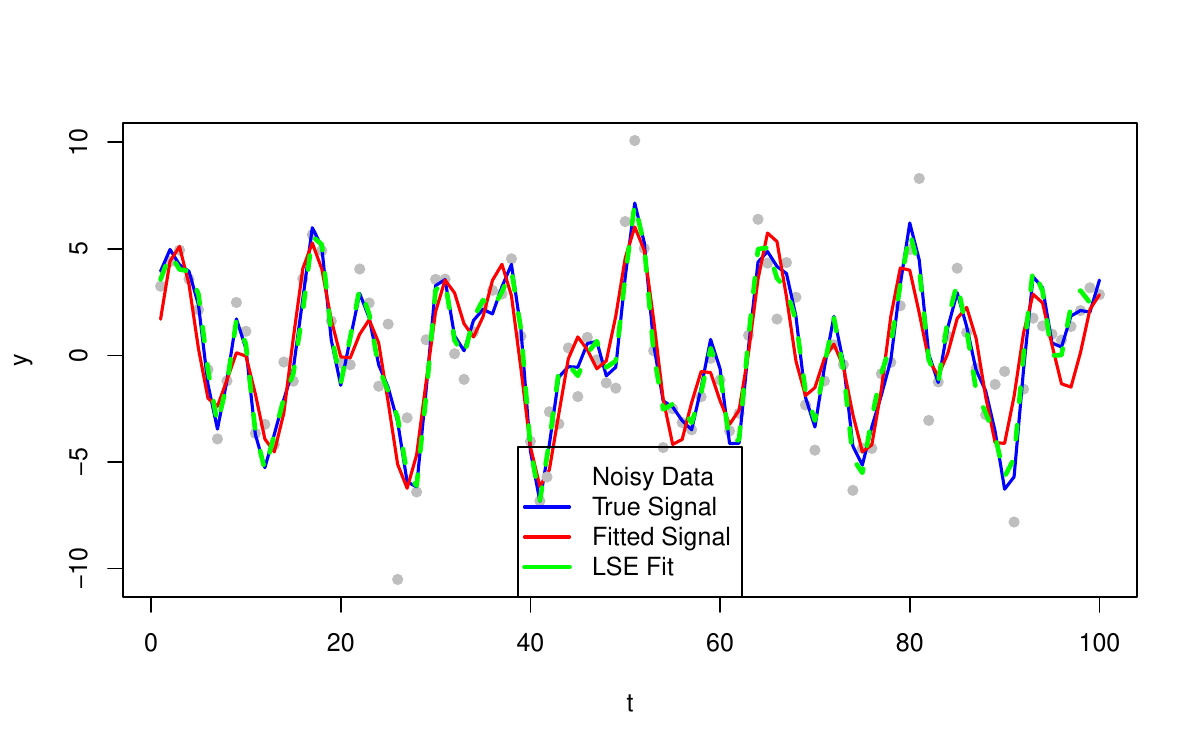}
        \caption{p = 3}
        \label{fig: syn_p2}
    \end{subfigure}

    \caption{Basic sinusoidal model fit comparison}
    \label{fig: syn_simple_models}
\end{figure}

\begin{table}
    \centering
    \begin{tabular}{|c|c|c|c|}
         \hline
         Model & Base MAE & MAE for LAD  & MAE for LSE\\
         \hline
         $p = 1$ & 1.054 & 1.342 & 1.083 \\
         $p = 3$ & 1.054 & 1.528 & 1.070 \\
         \hline
    \end{tabular}
    \caption{Comparing MAE for LSE and LAD model fits}
    \label{tab:syn_MAE_comp}
\end{table}

\paragraph{Scaling with model complexity.}
To further investigate the behavior of the method in high-dimensional regimes, we fix $n = 500$ and vary the number of components $p$ over a wide range:
\[
p \in \{1, 5, 10, 20, 50, 100, 200, 500, 1000\}.
\]
For each value of $p$, synthetic data are generated according to the model above, and the LAD estimator is computed. We report performance using the mean absolute deviation (MAD) between the observed data and both the true and fitted signals:
\[
\mathrm{MAD} = \frac{1}{n}\sum_{t=1}^n |y_t - \hat y_t|.
\]

\paragraph{Effect of initialization.}
Given the non-convexity of the optimization problem, initialization plays a critical role in determining the quality of the final solution. To study this effect, we consider multiple initialization strategies. In addition to a baseline random initialization, we construct warm starts by perturbing the true parameters with Gaussian noise of varying standard deviation:
\[
\theta_j^{(0)} = \theta_j + \eta_j, \quad \eta_j \sim \mathcal{N}(0, \sigma^2),
\]
with analogous perturbations applied to $A_j$ and $B_j$. We consider $\sigma \in \{0.1, 1, 2, \sqrt{10}\}$ to span a range from near-oracle initialization to highly noisy starting points. Table~\ref{tab:synthetic_warmstart} reports the resulting MAD values across different values of $p$ and initialization strategies.

Several trends emerge from these experiments. First, the baseline random initialization becomes increasingly unstable as $p$ increases, and the fitted MAD deteriorates rapidly in overparameterized regimes. In contrast, warm-start strategies significantly improve performance, particularly when the initialization is close to the true parameters. Even moderate perturbations (e.g., $\sigma = 1$) yield substantial gains over random initialization. In extreme cases, poor initialization leads to divergence or highly suboptimal fits (due to the non-convexity of the objective). Overall, the proposed method demonstrates strong robustness to heavy-tailed noise and the ability to scale to high-dimensional settings, provided that reasonable initialization strategies are employed.

\begin{table}
\centering
\caption{Effect of initialization quality on LAD estimation across model complexity}
\begin{tabular}{|r|c|c|c|c|c|c|}
\hline
$p$ & MAD (True) & Random Init & sd = 0.1 & sd = 1 & sd = 2 & sd = $\sqrt{10}$ \\
\hline
1    & 1.0783 & 1.1043 & 2.1460 & 1.1043 & 1.1043 & 1.1043 \\
5    & 1.0689 & 1.4663 & 1.5057 & 1.5032 & 1.5051 & $6.0783 \times 10^6$ \\
10   & 1.1543 & 3.6558 & 2.6125 & 2.8213 & $1.9623 \times 10^7$ & $6.4517 \times 10^7$ \\
20   & 1.1200 & 5.5499 & 3.5503 & $1.4008 \times 10^6$ & $5.6659 \times 10^5$ & $2.0304 \times 10^7$ \\
50   & 1.0978 & 8.7855 & 4.9486 & $2.9989 \times 10^5$ & $4.9118 \times 10^7$ & $1.2251 \times 10^7$ \\
100  & 1.0709 & 11.4082 & 6.1202 & $3.9978 \times 10^4$ & $1.8113 \times 10^5$ & $3.3632 \times 10^6$ \\
200  & 1.0499 & 117.8261 & 6.2874 & $1.5842 \times 10^4$ & $1.1499 \times 10^6$ & $6.7248 \times 10^6$ \\
500  & 1.0686 & $3.8731 \times 10^3$ & 11.4654 & $1.6691 \times 10^4$ & $2.0627 \times 10^5$ & $1.4869 \times 10^6$ \\
1000 & 1.0288 & $1.2277 \times 10^4$ & 14.8495 & $5.0821 \times 10^3$ & $4.0419 \times 10^3$ & $1.2801 \times 10^4$ \\
\hline
\end{tabular}
\label{tab:synthetic_warmstart}
\end{table}

\subsection{Real Data: Mauna Loa CO$_2$ Analysis}

We evaluate the proposed LAD-based sinusoidal estimation framework on the Mauna Loa atmospheric CO$_2$ dataset \citep{Keeling1976}, a canonical example of a real-world time series exhibiting both periodic structure and non-Gaussian noise characteristics. The data are first detrended to isolate the oscillatory component, and the model is fit using the proposed algorithm under different configurations.

We begin by applying the proposed sinusoidal model with both sine and cosine components directly to the data for varying values of $p$. Figure~\ref{fig:mauna_naive} shows the resulting fits for $p \in \{5,10,20,50\}$. 

\begin{figure}
    \centering

    \begin{subfigure}{0.45\textwidth}
        \centering
        \includegraphics[width=\linewidth]{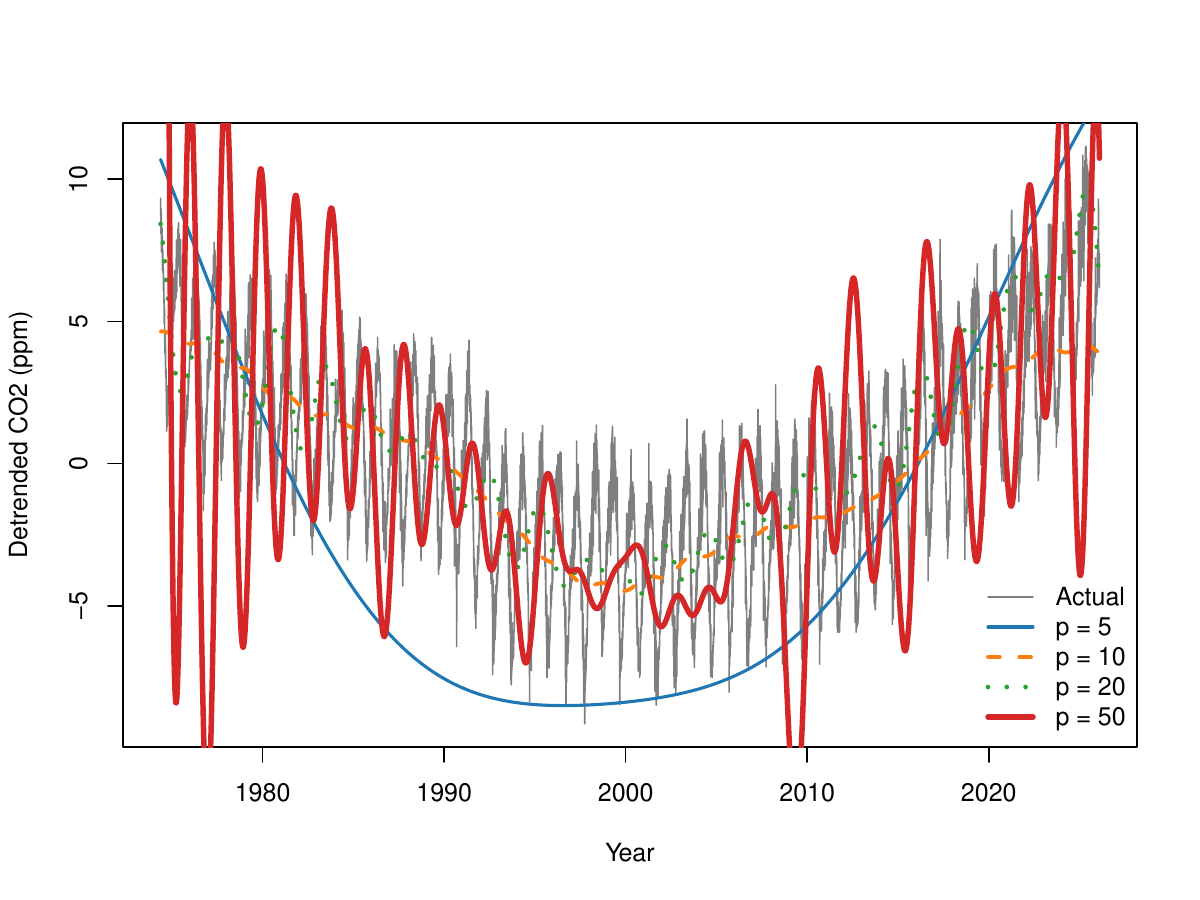}
        \caption{Naive model fit}
        \label{fig:mauna_naive}
    \end{subfigure}
    \hfill
    \begin{subfigure}{0.45\textwidth}
        \centering
        \includegraphics[width=\linewidth]{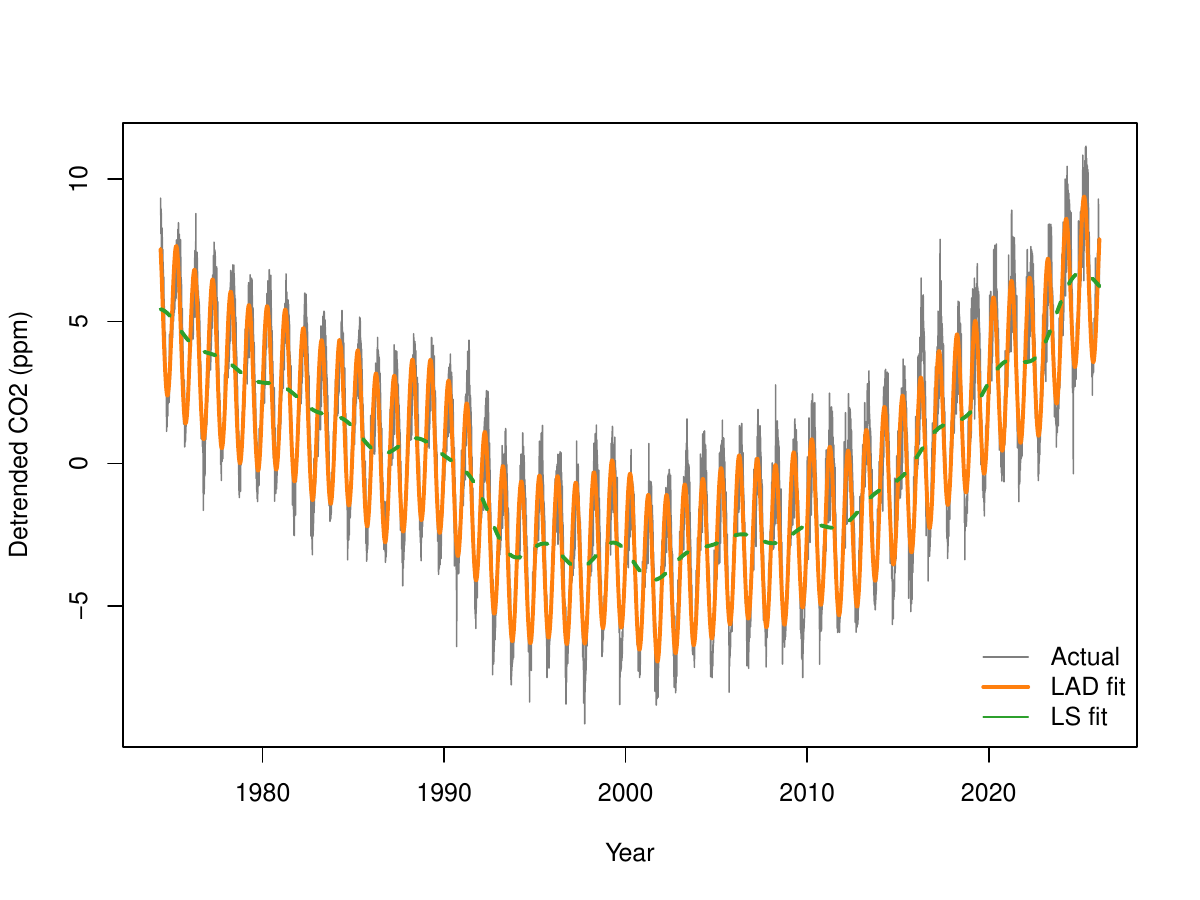}
        \caption{Only sin model fit compared to LSE}
        \label{fig:mauna_sin_only}
    \end{subfigure}

    \caption{LAD model fits for Mauana Lao CO$_2$ data}
    \label{fig: Mauana_fit}
\end{figure}

As seen in Figure~\ref{fig:mauna_naive}, the naive application of the coordinate descent algorithm does not yield satisfactory fits. Increasing the number of components leads to instability and overfitting behavior, with the model failing to capture the underlying seasonal structure in a coherent manner. This highlights the sensitivity of the full sine-cosine parameterization to initialization and optimization challenges in real-world settings.

To address this, we consider a simplified formulation using only sine components, effectively fixing the phase structure and reducing the complexity of the optimization problem. This modification leads to significantly improved fits. Figure~\ref{fig:mauna_sin_only} presents the resulting LAD fit alongside the least-squares (LS) fit for $p=10$.

As illustrated in Figure~\ref{fig:mauna_sin_only}, the LAD-based fit closely tracks the observed oscillatory behavior while maintaining stability across the time horizon. The LS fit performs comparably in smooth regions but is more sensitive to local fluctuations (MAD for the LAD fit is 0.896 and that for LSE is 1.921). This experiment demonstrates that modest structural modifications to the model—such as restricting to sine components—can substantially improve optimization performance while retaining flexibility. More broadly, it highlights the adaptability of the proposed framework to different modeling choices.

Finally, we assess robustness by introducing synthetic outliers into the dataset. We contaminate the observations by adding large-magnitude noise to a randomly selected subset of points, with contamination levels ranging from $5\%$ to $30\%$.

Figure~\ref{fig:mauna_outliers} shows the resulting fits under increasing levels of contamination.

\begin{figure}
    \centering
    \includegraphics[width=0.9\linewidth]{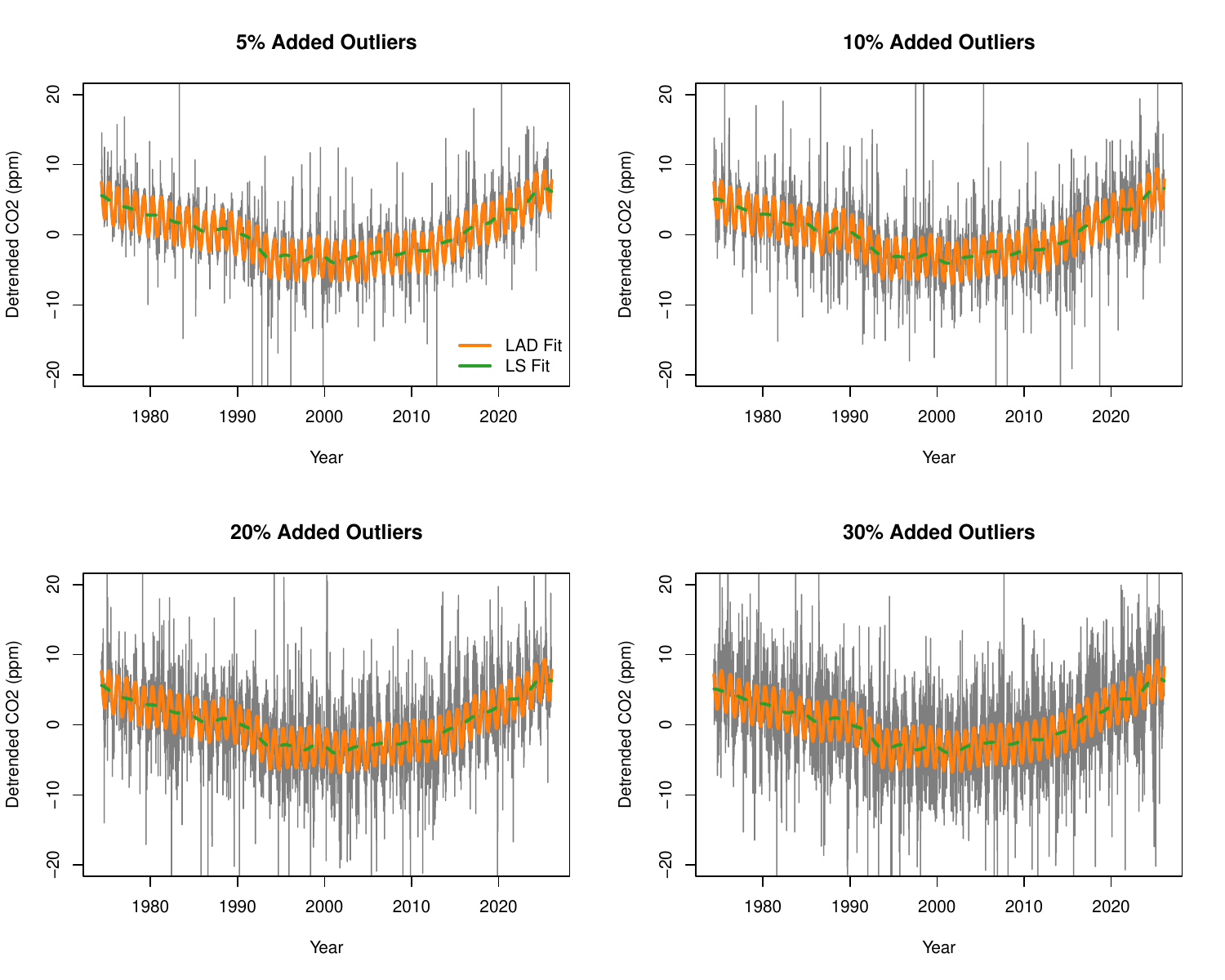}
    \caption{Outlier analysis for LAD and LSE fits}
    \label{fig:mauna_outliers}
\end{figure}

As seen in Figure~\ref{fig:mauna_outliers}, the LAD-based estimator remains stable even under substantial contamination, preserving the underlying seasonal structure. In contrast, the LS-based fit deteriorates rapidly as the proportion of outliers increases, exhibiting large deviations and loss of structure.

We further quantify this behavior using two metrics: (i) mean absolute error (MAE) between the observed (without added outliers) and fitted signals, and (ii) parameter recovery error, defined as the mean absolute deviation between estimated parameters for all contaminated datasets and reference parameters (taken to be the estimates from the original observed data). The numerical results, reported in Table~\ref{tab:outlier_analysis}, confirm that the LAD estimator consistently outperforms LS under increasing contamination.

\begin{table}
\centering
\caption{Performance under increasing outlier contamination}
\begin{tabular}{|c|c|c|c|c|}
\hline
Outliers (\%) & MAE (LAD) & MAE (LS) & MAE for $A$s (LAD) & MAE for $\theta$s (LAD) \\
\hline
0   &  0.8957 & 1.9206 & 0.0000 & 0.0000 \\
5   &  0.9094 & 1.9218 & 0.1878 & 0.4207 \\
10  &  0.8954 & 1.9209 & 0.1684 & 0.4523 \\
20  &  0.9028 & 1.9230 & 0.2985 & 0.2822 \\
30  &  0.9114 & 1.9226 & 0.3066 & 0.2980 \\
\hline
\end{tabular}
\label{tab:outlier_analysis}
\end{table}

These experiments illustrate three key points. First, while the full sine-cosine parameterization is theoretically appealing, it can be difficult to optimize directly in complex real-world settings. Second, appropriate structural modifications, such as simplifying the parameterization, can significantly improve performance without sacrificing interpretability. Third, the LAD objective provides strong robustness to outliers, making it particularly well-suited for noisy environmental time series data.

\subsection{AirPassengers Data}

We next consider the AirPassengers dataset, a classical time series exhibiting strong seasonal structure \citep{box2015time}. The data consist of monthly airline passenger counts over a relatively short time horizon ($n \approx 140$), making it well-suited for studying the behavior of the proposed method under limited sample sizes. We fit the proposed multi-frequency LAD model for increasing values of $p \in \{3, 5, 10 \}$. Figure~\ref{fig:air_fit} shows the resulting fits.

\begin{figure}[htbp]
\centering

\begin{minipage}{0.55\textwidth}
\centering
\includegraphics[width=\textwidth]{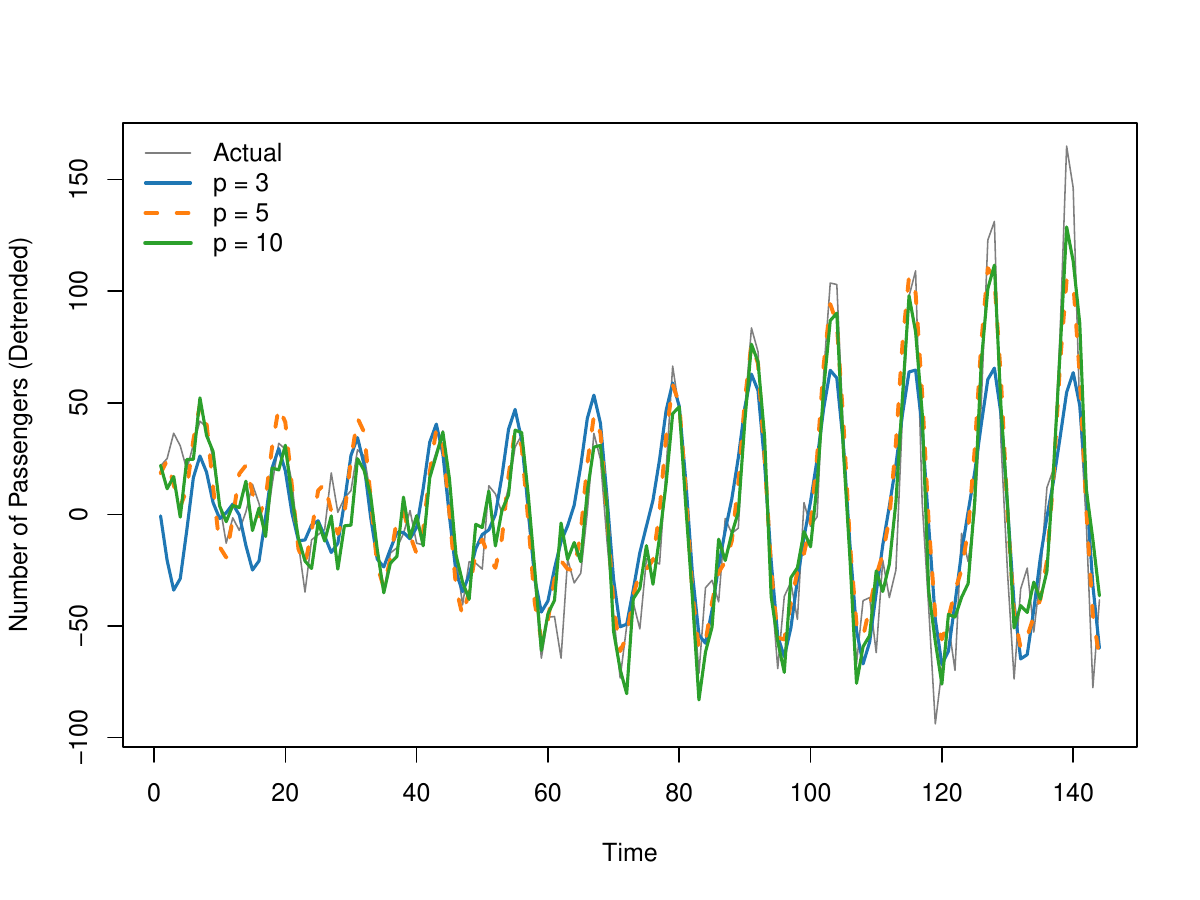}

\captionof{figure}{LAD fits for increasing $p$}
\label{fig:air_fit}
\end{minipage}
\hfill
\begin{minipage}{0.4\textwidth}
\centering

    \begin{tabular}{|c|c|}
        \hline
        $p$ & MAE\\
        \hline
        3  & 19.9252 \\
        5  & 14.3049 \\
        10 & 12.7891 \\
        20 & 511.0688 \\
\hline
\end{tabular}

\captionof{table}{Performance metrics.}
\label{tab:air_results}
\end{minipage}

\end{figure}

As seen in Figure~\ref{fig:air_fit}, the model progressively improves as $p$ increases. For small values of $p$, the fit captures the dominant seasonal pattern, whereas larger values allow the model to capture finer variations in the data. In particular, $p=10$ provides a good balance between flexibility and stability. We summarize the numerical performance of the model in Table~\ref{tab:air_results}, reporting mean absolute error (MAE) for different values of $p$.

Due to the relatively small sample size, increasing $p$ beyond this range leads to overfitting. In particular, for $p=20$, the model becomes unstable and exhibits erratic behavior, reflecting the difficulty of estimating a large number of parameters from limited data. This highlights a key trade-off between model complexity and sample size in multi-frequency settings. This experiment highlights the proposed method's ability to capture structured seasonal patterns with relatively few components, while also illustrating the risks of overparameterization in small-sample settings.

\subsection{UK Driver Deaths Data}

We next consider the UKDriverDeaths dataset, which records monthly counts of road accident fatalities in the United Kingdom and exhibits a clear seasonal structure with moderate noise \citep{harvey1990forecasting}. Compared to the AirPassengers data, this dataset presents a slightly more realistic setting with increased variability while still retaining strong periodic patterns. We fit the proposed multi-frequency LAD model for increasing values of $p \in \{3, 5, 10\}$. Figure~\ref{fig:uk_fit} shows the resulting fits.

\begin{figure}
\centering

\begin{minipage}{0.55\textwidth}
\centering
\includegraphics[width=\textwidth]{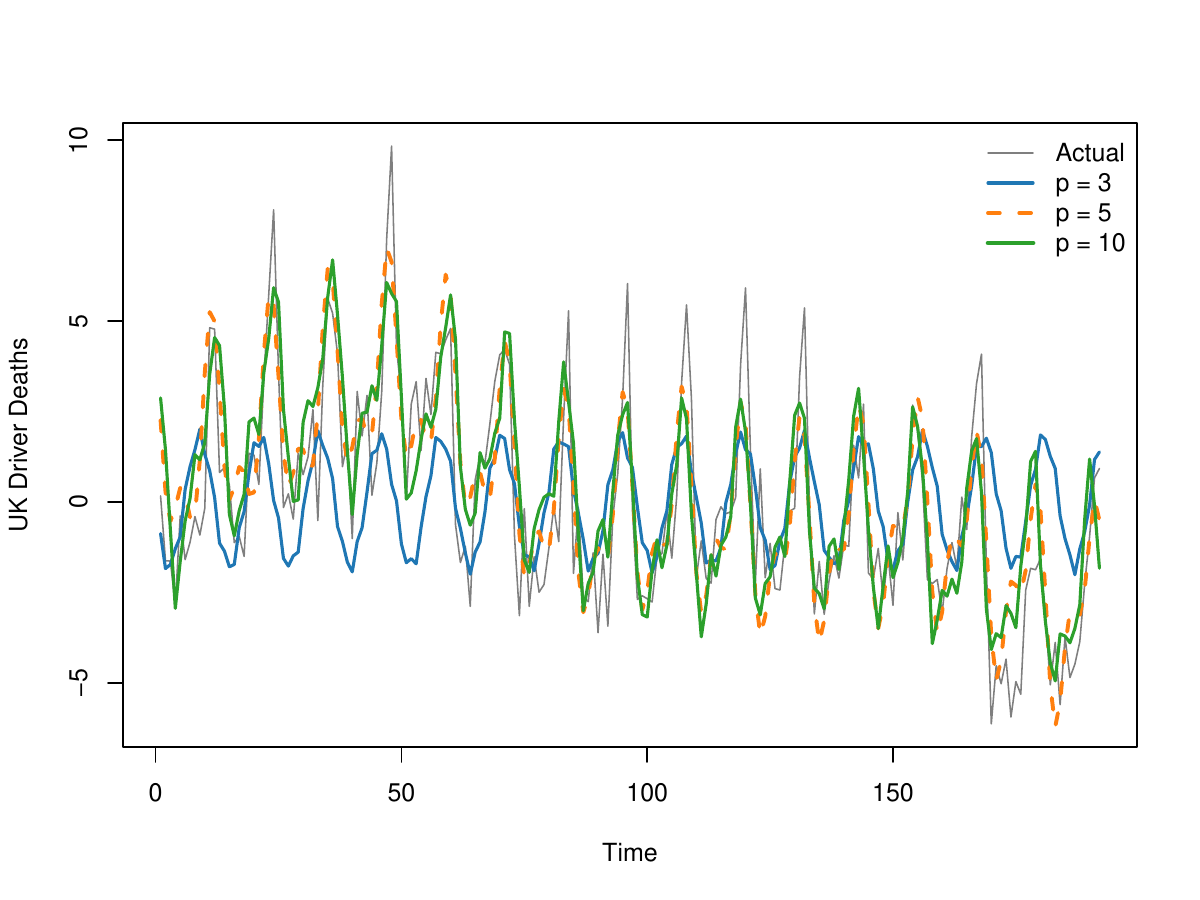}

\captionof{figure}{LAD fits for increasing $p$}
\label{fig:uk_fit}
\end{minipage}
\hfill
\begin{minipage}{0.4\textwidth}
\centering

\begin{tabular}{|c|c|}
\hline
$p$ & MAE \\
\hline
3  & 2.0423 \\
5  & 1.1823 \\
10 & 1.3266 \\
20 & 7.1713 \\
\hline
\end{tabular}

\captionof{table}{Performance metrics.}
\label{tab:uk_results}
\end{minipage}

\end{figure}

As seen in Figure~\ref{fig:uk_fit}, the model successfully captures the dominant seasonal structure of the data even for small values of $p$. Increasing $p$ from $3$ to $5$ leads to a substantial improvement in fit, as reflected in the reduction in MAE reported in Table~\ref{tab:uk_results}. However, unlike the AirPassengers dataset, further increasing the model complexity does not yield consistent improvements. In particular, the fit for $p=10$ shows only marginal degradation in MAE, whereas for $p=20$ the model becomes unstable and exhibits clear overfitting. This is also visible in the fitted curves, which begin to capture noise rather than the underlying seasonal structure. In practice, relatively small values of $p$ are sufficient to capture the series' essential dynamics, and increasing $p$ beyond this range yields diminishing returns and potential instability.

\section{Conclusion}
\label{sec:conclusion}

In this work, we developed a robust framework for parameter estimation in sinusoidal regression models based on the least absolute deviations (LAD) objective. By directly optimizing the LAD criterion, we address the well-known sensitivity of least-squares methods to heavy-tailed noise and outliers. 

The proposed approach leverages the partial convexity of the objective through a coordinate descent scheme: amplitude parameters are updated via exact weighted median computations, while frequency parameters are estimated using a periodogram-inspired search strategy. This results in a simple, modular algorithm that is both interpretable and easy to implement. We established strong consistency and asymptotic normality of the estimator under mild conditions, extending existing results to the sine-cosine formulation of the model.

Through extensive experiments on synthetic and real-world datasets, we demonstrated several key properties of the method. The estimator remains stable as model complexity increases when appropriately initialized, exhibits strong robustness to outliers compared to least-squares methods, and adapts effectively to different types of periodic signals. At the same time, our results highlight important practical considerations, including sensitivity to initialization (and several initialization techniques) and the challenges posed by overparameterized regimes.

Several directions for future work remain. A natural extension is to incorporate regularization into the LAD framework, particularly in multi-frequency settings. Penalization schemes for amplitude parameters or the number of active components can improve stability, enable automatic model selection, and mitigate overfitting in high-dimensional regimes. Additionally, more structured parameterizations and adaptive frequency-selection strategies may further improve performance on complex real-world signals.

Overall, the proposed LAD-based approach provides a flexible and robust alternative to classical least-squares methods for sinusoidal estimation, with both strong theoretical guarantees and practical applicability across a range of signal processing problems.

\newpage
\bibliography{references}
\bibliographystyle{apalike}

\newpage
\section*{Appendix}

\setcounter{theorem}{0}
\begin{theorem}[Strong Consistency]
Under the stated assumptions,
\[
(\hat{A}_n, \hat{B}_n, \hat{\theta}_n)
\xrightarrow{\text{a.s.}}
(A^0, B^0, \theta^0).
\]
\end{theorem}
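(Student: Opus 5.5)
I will follow the standard Wald/Jennrich-type argument for nonlinear $L_1$ regression, as adapted by \cite{roy2023consistencyasymptoticnormalityabsolute}. Write $\varepsilon_t = y_t - A^0\sin(\theta^0 t) - B^0\cos(\theta^0 t)$ and $g_t(A,B,\theta) = A\sin(\theta t)+B\cos(\theta t) - A^0\sin(\theta^0 t) - B^0\cos(\theta^0 t)$. I center the objective as
\[
R_n(A,B,\theta) = Q_n(A,B,\theta) - Q_n(A^0,B^0,\theta^0) = \frac1n\sum_{t=1}^n \bigl(|\varepsilon_t - g_t| - |\varepsilon_t|\bigr),
\]
which removes the moment of $|\varepsilon_t|$ from the analysis. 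Because $\hat\eta_n$ minimizes $Q_n$, we have $R_n(\hat\eta_n)\le 0$. It therefore suffices to show that for every $\delta>0$,
\[
\liminf_n \inf_{\eta\in S_\delta} R_n(\eta) > 0 \quad\text{a.s.},
\]
where $S_\delta = \{\eta\in\Theta : |A-A^0|+|B-B^0|+|\theta-\theta^0|\ge\delta\}$. The conclusion then follows by contradiction along a subsequence.

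\textbf{Step 1 (uniform SLLN).} Put $h(x)=\mathbb E\bigl(|\varepsilon - x| - |\varepsilon|\bigr)$. Since $F(0)=1/2$ and $f$ is continuous, $h(x)=2\int_0^x (F(u)-1/2)\,du \ge 0$. Moreover $h$ is strictly positive for $x\neq 0$ near zero, with $h(x)\approx f(0)x^2$ when $f(0)>0$; more generally it is positive wherever $F$ is not flat.

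The summands $U_t(\eta)=|\varepsilon_t-g_t|-|\varepsilon_t|$ are independent and bounded by $|g_t|\le M$ uniformly over $K\times(0,\pi)$. They are also Lipschitz in $(A,B)$ with constant $1$. In $\theta$, however, the Lipschitz constant is of order $t$, so I handle $\theta$ with a mesh of width $n^{-3}$ on $(0,\pi)$. Over such a mesh the oscillation of $g_t$ is $O(n^{-2})$, and the mesh has polynomially many points. Hoeffding's inequality for bounded independent summands, together with a union bound and Borel--Cantelli, then gives
\[
\sup_{\eta}\Bigl|R_n(\eta) - \frac1n\sum_t h(g_t(\eta))\Bigr| \to 0 \quad\text{a.s.}
\]

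\textbf{Step 2 (deterministic separation).} This is the main obstacle: I must show that
\[
\liminf_n \inf_{S_\delta}\frac1n\sum_t h(g_t(\eta))>0.
\]
I split $S_\delta$ into two regions.

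First take $|\theta-\theta^0|\ge \delta'$. By the orthogonality sums already used in the paper, which hold uniformly once $|\theta-\theta^0|\ge c/n$ via Dirichlet-kernel bounds, we get
\[
\frac1n\sum_t g_t^2 \to \tfrac12\bigl(A^2+B^2+(A^0)^2+(B^0)^2\bigr) \ge \tfrac12\bigl((A^0)^2+(B^0)^2\bigr) > 0.
\]
Since $|g_t|\le M$, a positive mean square forces a positive fraction of indices $t$ with $|g_t|\ge c$. On those indices $h(g_t)\ge \min_{c\le|x|\le M}h(x)>0$, assuming $F$ is strictly increasing near $0$, as follows from a positive density there. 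This also covers $\theta$ near $0$ or $\pi$, where $\sin(\theta t)$ degenerates, because the mean square stays bounded below by the $(A^0,B^0)$ part.

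Now take $|\theta-\theta^0|<\delta'$ with $|A-A^0|+|B-B^0|\ge\delta/2$. Here I use the sum-to-product representation from the background section together with $\frac1n\sum g_t^2$. This quantity is bounded below by a constant times $(A-A^0)^2+(B-B^0)^2$ whenever $\theta$ is within $O(1/n)$ of $\theta^0$, and by the previous case otherwise. The delicate point is uniformity in $\theta$ near $\theta^0$ at scales between $1/n$ and $\delta'$. There I would rely directly on the identity
\[
\frac1n\sum_t g_t^2 = \tfrac12\bigl(A^2+B^2+(A^0)^2+(B^0)^2\bigr) - (AA^0+BB^0)\,D_n(\theta-\theta^0) + o(1),
\]
where $D_n$ is a normalized Dirichlet kernel with $|D_n|\le 1$. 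This is at least $\tfrac12\bigl((A-A^0)^2+(B-B^0)^2\bigr)-o(1)$, and it gives the bound in both regimes at once.

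\textbf{Step 3 (conclusion).} Combining Steps 1 and 2, on a probability-one event $R_n$ is eventually positive on all of $S_\delta$. This contradicts $R_n(\hat\eta_n)\le0$ unless $\hat\eta_n\notin S_\delta$ eventually. Since $\delta>0$ is arbitrary, this yields the claimed almost sure convergence.

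One caveat concerns Step 2. The first region only gives a fixed $\delta'$ separation in $\theta$, so consistency of $\hat\theta_n$ in the statement means $|\hat\theta_n-\theta^0|\to0$. The faster $n$-rate is not needed here and is deferred to the normality theorem.
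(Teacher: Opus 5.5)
Your overall architecture is sound, and in places it is tighter than the paper's argument, which relies on a Glivenko--Cantelli claim, a pointwise limit $Q$, and ``standard M-estimation theory''. The centering, the Hoeffding-plus-mesh uniform SLLN, working with the $n$-dependent function $\frac1n\sum_t h(g_t)$, the fraction-of-indices step, and region~1 of Step~2 are all fine. The genuine gap is in region~2, at the scale $n|\theta-\theta^0|=O(1)$, which is the point you yourself flag as delicate. Your displayed identity drops a term. By product-to-sum formulas, the cross term in $\frac1n\sum_t g_t^2$ is
\[
-(AA^0+BB^0)\,\frac1n\sum_{t=1}^n\cos\bigl((\theta-\theta^0)t\bigr)-(AB^0-A^0B)\,\frac1n\sum_{t=1}^n\sin\bigl((\theta-\theta^0)t\bigr),
\]
and when $\theta-\theta^0=c/n$ the sine average tends to $(1-\cos c)/c$, which is not $0$ in general. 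In complex form, set $\zeta=B-iA$, $\zeta^0=B^0-iA^0$ and $K_n(\Delta)=\frac1n\sum_{t=1}^n e^{i\Delta t}$. The correct statement is then $\frac1n\sum_t g_t^2=\frac12(|\zeta|^2+|\zeta^0|^2)-\operatorname{Re}\bigl(\zeta\overline{\zeta^0}K_n(\theta-\theta^0)\bigr)+o(1)$. Here $|K_n|\le 1$ only yields $\frac12(|\zeta|-|\zeta^0|)^2$, which vanishes whenever $(A,B)$ is a rotation of $(A^0,B^0)$: a frequency offset of order $1/n$ can be partly absorbed by a phase rotation. Consequently your bound $\frac1n\sum_t g_t^2\ge\frac12\{(A-A^0)^2+(B-B^0)^2\}-o(1)$ is false. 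For instance, take $|\zeta^0|=1$, $\zeta=\zeta^0e^{-i}$ and $\theta=\theta^0+2/n$. The left side tends to $\int_0^1\{1-\cos(2s-1)\}\,ds=1-\sin 1\approx 0.16$, whereas the right side equals $1-\cos 1\approx 0.46$. Your separate assertion for the $O(1/n)$ regime is given no proof, and the identity you invoke to handle ``both regimes at once'' cannot supply one.

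The positivity you need is nevertheless true, but it requires a rescaling/compactness argument in place of the identity. Take $\delta'<\min(\theta^0,\pi-\theta^0)$, so that the terms in $e^{2i\theta t}$ and $e^{i(\theta+\theta^0)t}$ are uniformly $O(1/n)$, and fix $L>\pi$. If $n|\theta-\theta^0|\ge L$, then $|K_n(\theta-\theta^0)|\le\pi/L$, hence $\frac1n\sum_t g_t^2\ge\frac12(1-\pi/L)|\zeta^0|^2-o(1)$. If $\theta=\theta^0+c/n$ with $|c|\le L$, then $\frac1n\sum_t g_t^2$ converges, uniformly over $(c,A,B)\in[-L,L]\times K$, to $\Phi(c,\zeta)=\frac12\int_0^1|\zeta e^{ics}-\zeta^0|^2\,ds$. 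This limit is continuous and vanishes only at $(c,\zeta)=(0,\zeta^0)$, so it is bounded below by a positive constant on the compact set where $|A-A^0|+|B-B^0|\ge\delta/2$. With this replacement, your fraction-of-indices step and Step~3 go through unchanged. For comparison, the paper's proof does not resolve this $1/n$-scale issue either. Its pointwise limit $Q$ is discontinuous at $\theta^0$, so the asserted uniform convergence $H_n\to Q$ fails near $\theta^0$. Your $S_\delta$ formulation is the right framework once this step is repaired.
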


\begin{proof}

We define the contrast function
\[
H_n(A,B,\theta)
=
Q_n(A,B,\theta) - Q_n(A^0,B^0,\theta^0),
\]
so that $H_n(A^0,B^0,\theta^0)=0$. Writing
\[
h_t(A,B,\theta)
=
A^0 \sin(\theta^0 t) + B^0 \cos(\theta^0 t)
- A \sin(\theta t) - B \cos(\theta t),
\]
we obtain
\begin{align*}
H_n(A,B,\theta)
&=
\frac{1}{n}\sum_{t=1}^n
\Bigl(
\bigl|y_t - A\sin(\theta t) - B\cos(\theta t)\bigr|
-
\bigl|y_t - A^0\sin(\theta^0 t) - B^0\cos(\theta^0 t)\bigr|
\Bigr) \\
&=
\frac{1}{n}\sum_{t=1}^n
\Bigl(
\bigl|\varepsilon_t + h_t(A,B,\theta)\bigr|
-
\bigl|\varepsilon_t\bigr|
\Bigr).
\end{align*}

\begin{lemma}
    \label{Lem: error_inequality}
    For any real number $h$,
    \[
    \mathbb{E}\bigl[|\varepsilon + h| - |\varepsilon|\bigr] \ge 0,
    \]
    with equality if and only if $h=0$.

    \begin{proof}
        Since $\varepsilon$ has median zero,
        \[
        \mathbb{E}|\varepsilon + h| - \mathbb{E}|\varepsilon|
        =
        2\int_0^{|h|}
        \bigl(F(u)-\tfrac12\bigr)\,du
        \ge 0,
        \]
        because $F(u)>\tfrac12$ for all $u>0$.
    \end{proof}

\end{lemma}

As a consequence, for each fixed $(A,B,\theta)$,
\[
\mathbb{E}\bigl[H_n(A,B,\theta)\bigr]
=
\frac{1}{n}\sum_{t=1}^n
\mathbb{E}\Bigl[
|\varepsilon_t + h_t(A,B,\theta)| - |\varepsilon_t|
\Bigr]
\ge 0.
\]
Moreover, by Lemma~\ref{Lem: error_inequality}, there exists $c>0$ such that
\[
\mathbb{E}\Bigl[
|\varepsilon_t + h_t(A,B,\theta)| - |\varepsilon_t|
\Bigr]
\ge
c\,|h_t(A,B,\theta)|.
\]

\paragraph{Uniform convergence.}
Let
\[
Z_t(A,B,\theta)
=
|\varepsilon_t + h_t(A,B,\theta)| - |\varepsilon_t|.
\]
Then
\[
|Z_t(A,B,\theta)|
\le
2\bigl(|\varepsilon_t| + |A^0| + |B^0| + |A| + |B|\bigr),
\]
and since $K$ is compact and $\mathbb{E}|\varepsilon_t|<\infty$,
$\mathbb{E}|Z_t(A,B,\theta)|$ is uniformly bounded over $\Theta$.
Furthermore, $Z_t(A,B,\theta)$ is continuous in $(A,B,\theta)$.

Since $\Theta$ is compact, $Z_t(A,B,\theta)$ is continuous in 
$(A,B,\theta)$ for every $t$, and
\[
|Z_t(A,B,\theta)|
\leq
2\left(
|\varepsilon_t|
+ |A_0| + |B_0| + |A| + |B|
\right),
\]
where the dominating function has a finite expectation uniformly over 
$\Theta$, the class
\[
\{Z_t(A,B,\theta):(A,B,\theta)\in\Theta\}
\]
is Glivenko--Cantelli. Hence, by a standard Uniform Law of Large Numbers 
for compact parametric families (see, \cite{newey1994large}),
\[
\sup_{(A,B,\theta)\in\Theta}
\left|
H_n(A,B,\theta)-EH_n(A,B,\theta)
\right|
\xrightarrow{\text{a.s.}}0.
\]

Hence,
\[
H_n(A,B,\theta)
\xrightarrow{\text{a.s.}}
Q(A,B,\theta)
\equiv
\lim_{n\to\infty}
\mathbb{E}H_n(A,B,\theta),
\]
uniformly on $\Theta$.

\paragraph{Identification of the limit.}
From the previous inequality,
\[
Q(A,B,\theta)
\ge
c
\lim_{n\to\infty}
\frac{1}{n}\sum_{t=1}^n
|h_t(A,B,\theta)|.
\]
Thus it suffices to show that
\[
I(A,B,\theta)
\equiv
\lim_{n\to\infty}
\frac{1}{n}\sum_{t=1}^n
\bigl|
h_t(A,B,\theta)
\bigr|
> 0
\quad
\text{for }(A,B,\theta)\neq(A^0,B^0,\theta^0).
\]

\paragraph{Case 1: $\theta=\theta_0$, $(A,B)\neq(A^0,B^0)$.}
Then
\[
h_t = (A^0-A)\sin(\theta^0 t) + (B^0-B)\cos(\theta^0 t).
\]
This is a non-trivial sinusoidal signal. Using standard averaging arguments,
\[
\lim_{n\to\infty}
\frac{1}{n}\sum_{t=1}^n |h_t|
= c_1 \sqrt{(A^0-A)^2 + (B^0-B)^2} > 0,
\]
for some constant $c_1>0$.

\paragraph{Case 2: $\theta\neq\theta_0$.}
Write
\[
h_t =
(A^0 \sin(\theta^0 t) + B^0 \cos(\theta^0 t))
-
(A \sin(\theta t) + B \cos(\theta t)).
\]
Using the trigonometric expansions from the previous subsection,
this can be expressed as a sum of terms involving
\[
\sin\!\left(\tfrac{\theta_0-\theta}{2}t\right)
\quad \text{and oscillatory factors}.
\]
The modulation term vanishes only on a sparse set of $t$, while the
remaining factors are bounded and non-degenerate whenever $(A^0,B^0)\neq(0,0)$.
Hence
\[
\frac{1}{n}\sum_{t=1}^n |h_t| > 0
\quad \text{for all } \theta \neq \theta^0.
\]

Therefore,
\[
Q(A,B,\theta)=0
\quad\text{if and only if}\quad
(A,B,\theta)=(A^0,B^0,\theta^0).
\]

\paragraph{Conclusion.}
Since $H_n(A,B,\theta)$ converges uniformly almost surely to $Q(A,B,\theta)$,
and $Q(A,B,\theta)$ has a unique global minimum at $(A^0,B^0,\theta^0)$,
it follows by standard M-estimation theory that
\[
(\hat A_n,\hat B_n,\hat\theta_n)
\xrightarrow{\text{a.s.}}
(A^0,B^0,\theta^0).
\]

\end{proof}

\begin{theorem}[Asymptotic Normality]
Under the stated assumptions, and assuming
\[
(\hat A_n,\hat B_n,\hat\theta_n)
\to (A^0,B^0,\theta^0)
\quad \text{almost surely},
\]
we have
\[
\begin{pmatrix}
\sqrt n (\hat A_n-A^0) \\
\sqrt n (\hat B_n-B^0) \\
n^{3/2}(\hat\theta_n-\theta^0)
\end{pmatrix}
\xrightarrow{d}
\mathcal N\!\left(
0,\,
\frac{1}{4f(0)^2}
\begin{pmatrix}
2 & 0 & 0 \\
0 & 2 & 0 \\
0 & 0 & \dfrac{6}{(A^0)^2 + (B^0)^2}
\end{pmatrix}
\right).
\]
\end{theorem}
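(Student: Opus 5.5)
The plan is the standard local-quadratic-approximation (convexity/Pollard) argument for nonlinear LAD, adapted to the three-parameter sine--cosine model. Introduce the normalizing matrix $D_n=\mathrm{diag}(n^{-1/2},n^{-1/2},n^{-3/2})$ and the local parameter $u=(u_1,u_2,u_3)$ with
\[
(A,B,\theta)=(A^0,B^0,\theta^0)+D_n u .
\]
Consider the recentred process
\[
Z_n(u)=\sum_{t=1}^n\bigl(|\varepsilon_t-g_t(u)|-|\varepsilon_t|\bigr),
\]
where $g_t(u)=A\sin(\theta t)+B\cos(\theta t)-A^0\sin(\theta^0t)-B^0\cos(\theta^0t)$. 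The estimator $\hat u_n=D_n^{-1}(\hat A_n-A^0,\hat B_n-B^0,\hat\theta_n-\theta^0)$ minimizes $Z_n$. By the strong consistency assumed in the statement (Theorem 1), it suffices to work on shrinking neighbourhoods of the truth.

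\textbf{Step 1 (linearization).} Taylor-expand $g_t(u)=x_{nt}^\top u+R_{nt}(u)$ with gradient vector
\[
x_{nt}=\Bigl(n^{-1/2}\sin(\theta^0 t),\;n^{-1/2}\cos(\theta^0 t),\;n^{-3/2}t\bigl(A^0\cos(\theta^0t)-B^0\sin(\theta^0t)\bigr)\Bigr).
\]
The remainder obeys $\sup_{t\le n}|R_{nt}(u)|=O(n^{-1})$ uniformly for $u$ in compacts, since the second derivative in $\theta$ carries a factor $t^2 n^{-3}\le n^{-1}$.

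\textbf{Step 2 (Knight's identity).} Apply Knight's identity
\[
|\varepsilon-v|-|\varepsilon|=-v\,\mathrm{sgn}(\varepsilon)+2\int_0^{v}\bigl(1\{\varepsilon\le s\}-1\{\varepsilon\le0\}\bigr)ds .
\]
This yields $Z_n(u)=-W_n^\top u+\tfrac12 u^\top(2f(0)\Sigma)u+o_p(1)$ pointwise in $u$, where $W_n=\sum_t \mathrm{sgn}(\varepsilon_t)x_{nt}$ and $\Sigma=\lim\sum_t x_{nt}x_{nt}^\top$.

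\emph{The linear term.} The signs $\mathrm{sgn}(\varepsilon_t)$ are i.i.d.\ with mean zero and variance one, and $\max_t\|x_{nt}\|\to0$. Lindeberg's CLT therefore gives $W_n\xrightarrow{d}N(0,\Sigma)$. The cross terms in the linear part involving $R_{nt}$ are $o_p(1)$, since $\sum_t R_{nt}^2=O(n^{-1})$.

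\emph{The quadratic term.} Its mean is $\sum_t\int_0^{g_t}(F(s)-F(0))ds\approx \tfrac{f(0)}2\sum_t g_t(u)^2$, using continuity of $f$ at $0$. Its variance is bounded by $\sum_t |g_t|^3\to0$.

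\emph{Computing $\Sigma$.} Use the standard trigonometric averages
\[
\tfrac1n\sum\sin^2\to\tfrac12,\qquad \tfrac1n\sum\sin\cos\to0,\qquad \tfrac1{n^2}\sum t\sin^2\to\tfrac14,\qquad \tfrac1{n^3}\sum t^2\sin^2\to\tfrac16,
\]
valid for $\theta^0\in(0,\pi)$. These give
\[
\Sigma=\begin{pmatrix}\tfrac12&0&\tfrac{B^0}{4}\\0&\tfrac12&-\tfrac{A^0}{4}\\ \tfrac{B^0}{4}&-\tfrac{A^0}{4}&\tfrac{(A^0)^2+(B^0)^2}{6}\end{pmatrix}.
\]

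\textbf{Step 3 (argmin step).} The limit process $-W^\top u+f(0)u^\top\Sigma u$ is strictly convex with unique minimizer $(2f(0))^{-1}\Sigma^{-1}W\sim N\bigl(0,\Sigma^{-1}/(4f(0)^2)\bigr)$. Conclude via an argmin continuous-mapping theorem.

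\textbf{Main obstacle.} $Z_n$ is not convex in $u_3$, so the convexity lemma of Pollard/Hjort--Pollard cannot be used directly. My first attempt would be to prove stochastic equicontinuity of the remainder process $Z_n(u)+W_n^\top u-f(0)u^\top\Sigma u$ on compacts, via a bracketing/chaining bound using the Lipschitz dependence of $g_t$ on $u$. I would then show $\hat u_n=O_p(1)$ by a rate argument. That argument uses the identification inequality $E[\cdot]\ge c|h_t|$ from the proof of Theorem 1 together with a peeling device on shells $\{2^{j}\le\|u\|<2^{j+1}\}$. Alternatively, the rate could be imported from the framework of \cite{roy2023consistencyasymptoticnormalityabsolute}, whose amplitude--phase parametrization maps smoothly to $(A,B)$.

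\textbf{Reconciling with the stated covariance.} $\Sigma$ has nonzero off-diagonal entries, so $\Sigma^{-1}$ should be compared with the stated diagonal form. Either the stated matrix arises under the convention used by Roy et al., or the off-diagonal terms must be tracked and the stated diagonal matrix reconciled via the delta method from their theorem. I would carry out that comparison explicitly as the final step.
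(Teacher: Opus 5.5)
Your route is the same as the paper's: Knight's identity, linearization with the gradient vector $x_{nt}$, a CLT for the sign term, a quadratic drift with curvature $f(0)$, then an argmin step. You are also more careful than the paper about the Taylor remainder and about the non-convexity in $u_3$, which the paper never addresses. The step that fails is your last one, and it fails because the displayed covariance is not the correct limit, not because of your method. Your $\Sigma$ is right apart from a sign slip. With your $x_{nt}$, $n^{-2}\sum_t t\sin(\theta^0 t)\bigl(A^0\cos(\theta^0 t)-B^0\sin(\theta^0 t)\bigr)\to -B^0/4$, and the $(2,3)$ entry is $+A^0/4$. Set $\rho^2=(A^0)^2+(B^0)^2$. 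Then $\det\Sigma=\rho^2/96$ and
\[
\Sigma^{-1}=\frac{1}{\rho^2}
\begin{pmatrix}
2(A^0)^2+8(B^0)^2 & -6A^0B^0 & 12B^0\\
-6A^0B^0 & 8(A^0)^2+2(B^0)^2 & -12A^0\\
12B^0 & -12A^0 & 24
\end{pmatrix},
\]
so your argument yields $\mathcal N\bigl(0,\Sigma^{-1}/(4f(0)^2)\bigr)$. This differs from the statement in three ways:
the frequency entry is $24/\rho^2$, four times the stated $6/\rho^2$;
the amplitude variances equal $2$ only if $B^0=0$ (resp.\ $A^0=0$);
and the amplitude--frequency covariances are nonzero.
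This is the classical least-squares/Cram\'er--Rao structure (frequency variance $24\sigma^2/\rho^2$) with $\sigma^2$ replaced by $1/(4f(0)^2)$. The paper's proof gets the diagonal matrix only by asserting that $\Sigma$ (and $J=f(0)\Sigma$) is diagonal. That discards the limits $n^{-2}\sum_t t\sin^2(\theta^0 t)\to\tfrac14$ and $n^{-2}\sum_t t\cos^2(\theta^0 t)\to\tfrac14$, which you correctly kept. Its answer is $\tfrac{1}{4f(0)^2}(\operatorname{diag}\Sigma)^{-1}$ rather than $\tfrac{1}{4f(0)^2}\Sigma^{-1}$.

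Consequently the ``reconciliation'' you plan has nothing to land on. Write $A\sin(\theta t)+B\cos(\theta t)=\rho\cos(\theta t-\phi)$. The LAD criterion is unchanged under this reparametrization, and $\hat\theta_n$ is literally the same estimator in both coordinate systems. The scaled information matrix in $(\rho,\phi,\theta)$ is
\[
\begin{pmatrix} 1/2 & 0 & 0\\ 0 & \rho^2/2 & -\rho^2/4\\ 0 & -\rho^2/4 & \rho^2/6\end{pmatrix},
\]
whose inverse again has frequency entry $24/\rho^2$. The delta method back to $(A,B)$ can only reproduce your non-diagonal matrix, so the statement as displayed is false. You should state and prove the corrected covariance $\Sigma^{-1}/(4f(0)^2)$ instead.

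A secondary point concerns your rate step. You plan to borrow $\mathbb{E}\bigl[|\varepsilon+h|-|\varepsilon|\bigr]\ge c|h|$ from the proof of Theorem 1, but that bound is false near $h=0$, where the left side is $f(0)h^2+o(h^2)$. What does hold, from $f(0)>0$, continuity of $f$ near $0$ and monotonicity of $F$, is a quadratic bound $\ge c_M h^2$ for $|h|\le M$. The peeling argument on shells of $u$ must be run with that quadratic bound.
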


\begin{proof}
Since the subgradient of $|x|$ is $\operatorname{sgn}(x)$ almost surely, the estimating equations are
\begin{align*}
0
&=
\frac1n
\sum_{t=1}^n
\operatorname{sgn}(r_t(A,B,\theta))
\sin(\theta t),
\\
0
&=
\frac1n
\sum_{t=1}^n
\operatorname{sgn}(r_t(A,B,\theta))
\cos(\theta t),
\\
0
&=
\frac1n
\sum_{t=1}^n
\operatorname{sgn}(r_t(A,B,\theta))
\Bigl(
A t\cos(\theta t)
-
B t\sin(\theta t)
\Bigr).
\end{align*}

At the true parameter,
\[
r_t(A^0,B^0,\theta^0)=\varepsilon_t.
\]

We use the convention
\[
\operatorname{sgn}(x)=
\begin{cases}
+1, & x>0,\\
0, & x=0,\\
-1, & x<0.
\end{cases}
\]

\paragraph{Local parameterization.}

Let
\[
u=
\begin{pmatrix}
u_1\\
u_2\\
u_3
\end{pmatrix},
\qquad
A=A^0+\frac{u_1}{\sqrt n},
\qquad
B=B^0+\frac{u_2}{\sqrt n},
\qquad
\theta=\theta^0+\frac{u_3}{n^{3/2}}.
\]

We study the local behavior of
\[
Q_n\!\left(
A^0+\frac{u_1}{\sqrt n},
B^0+\frac{u_2}{\sqrt n},
\theta^0+\frac{u_3}{n^{3/2}}
\right)
-
Q_n(A^0,B^0,\theta^0).
\]

\paragraph{Knight identity.}

Using Knight's identity \citep{knight1998limiting},
\[
|x-h|-|x|
=
-h\,\operatorname{sgn}(x)
+
2\int_0^h
\left(
\mathbf 1\{x\le s\}
-
\mathbf 1\{x\le0\}
\right)\,ds,
\]
with $x=\varepsilon_t$ and
\[
h_t
=
A^0\sin(\theta^0 t)
+
B^0\cos(\theta^0 t)
-
A\sin(\theta t)
-
B\cos(\theta t),
\]
we obtain
\begin{align*}
n\Bigl(
Q_n(A,B,\theta)
-
Q_n(A^0,B^0,\theta^0)
\Bigr)
&=
-
\sum_{t=1}^n
h_t\operatorname{sgn}(\varepsilon_t)
\\
&\quad
+
2
\sum_{t=1}^n
\int_0^{h_t}
\left(
\mathbf 1\{\varepsilon_t\le s\}
-
\mathbf 1\{\varepsilon_t\le0\}
\right)\,ds.
\end{align*}

\paragraph{Linearization of $h_t$.}

Using a first-order Taylor expansion,
\begin{align*}
h_t
&=
A^0\sin(\theta^0 t)
+
B^0\cos(\theta^0 t)
\\
&\quad
-
\left(
A^0+\frac{u_1}{\sqrt n}
\right)
\sin\!\left(
\theta^0 t
+
\frac{u_3}{n^{3/2}}t
\right)
\\
&\quad
-
\left(
B^0+\frac{u_2}{\sqrt n}
\right)
\cos\!\left(
\theta^0 t
+
\frac{u_3}{n^{3/2}}t
\right)
\\
&=
-
\frac{u_1}{\sqrt n}
\sin(\theta^0 t)
-
\frac{u_2}{\sqrt n}
\cos(\theta^0 t)
\\
&\quad
-
\frac{u_3}{n^{3/2}}
\Bigl(
A^0 t\cos(\theta^0 t)
-
B^0 t\sin(\theta^0 t)
\Bigr)
+
o(n^{-1/2}).
\end{align*}

\paragraph{Stochastic term.}

Substituting into the first sum,
\[
\sum_{t=1}^n
h_t\operatorname{sgn}(\varepsilon_t)
=
u^\top W_n
+
o_p(1),
\]
where
\[
W_n
=
\begin{pmatrix}
\dfrac1{\sqrt n}
\sum_{t=1}^n
\operatorname{sgn}(\varepsilon_t)
\sin(\theta^0 t)
\\[2ex]
\dfrac1{\sqrt n}
\sum_{t=1}^n
\operatorname{sgn}(\varepsilon_t)
\cos(\theta^0 t)
\\[2ex]
\dfrac1{n^{3/2}}
\sum_{t=1}^n
\operatorname{sgn}(\varepsilon_t)
\Bigl(
A^0 t\cos(\theta^0 t)
-
B^0 t\sin(\theta^0 t)
\Bigr)
\end{pmatrix}.
\]

By the multivariate central limit theorem,
\[
W_n
\xrightarrow{d}
\mathcal N(0,\Sigma),
\]
where
\[
\Sigma
=
\begin{pmatrix}
\frac12 & 0 & 0 \\
0 & \frac12 & 0 \\
0 & 0 &
\dfrac{(A^0)^2+(B^0)^2}{6}
\end{pmatrix},
\]
using standard trigonometric averaging arguments.

\paragraph{Deterministic term.}

For the second term,
\[
\mathbb E\!\left[
2
\int_0^{h_t}
\left(
\mathbf 1\{\varepsilon_t\le s\}
-
\mathbf 1\{\varepsilon_t\le0\}
\right)\,ds
\right]
=
f(0)h_t^2
+
o(h_t^2).
\]

Summing over $t$ yields
\[
u^\top J u,
\]
where
\[
J
=
f(0)
\begin{pmatrix}
\frac12 & 0 & 0 \\
0 & \frac12 & 0 \\
0 & 0 &
\dfrac{(A^0)^2+(B^0)^2}{6}
\end{pmatrix},
\]
which is positive definite since $f(0)>0$.

\paragraph{Limiting objective.}

Combining terms,
\[
n
\Bigl(
Q_n(A,B,\theta)
-
Q_n(A^0,B^0,\theta^0)
\Bigr)
=
-
u^\top W_n
+
u^\top J u
+
o_p(1).
\]

The limiting random objective is
\[
L(u)
=
-
u^\top Z
+
u^\top J u,
\qquad
Z\sim\mathcal N(0,\Sigma),
\]
which is minimized at
\[
u^\ast
=
\frac12
J^{-1}Z.
\]

\paragraph{Asymptotic distribution.}

Therefore,
\[
\begin{pmatrix}
\sqrt n(\hat A_n-A^0)
\\
\sqrt n(\hat B_n-B^0)
\\
n^{3/2}(\hat\theta_n-\theta^0)
\end{pmatrix}
\xrightarrow{d}
\mathcal N\!\left(
0,\,
\frac14
J^{-1}\Sigma J^{-1}
\right)
=
\mathcal N\!\left(
0,\,
\frac{1}{4f(0)^2}
\begin{pmatrix}
2 & 0 & 0 \\
0 & 2 & 0 \\
0 & 0 &
\dfrac{6}{(A^0)^2+(B^0)^2}
\end{pmatrix}
\right).
\]
\end{proof}

\end{document}